\documentclass[%
 reprint,
 onecolumn,
 amsmath,
 amssymb,
 aps,
 unsortedaddress
]{revtex4-2}
\usepackage{algpseudocode}
\usepackage{graphicx}
\usepackage{dcolumn}
\usepackage{bm}
\usepackage{subcaption}
\usepackage{makecell}
\usepackage{multirow}
\usepackage{amsmath,amssymb,amsthm,easybmat,verbatim}
\allowdisplaybreaks
\usepackage{caption}
\usepackage{xcolor}
\usepackage{listings}
\usepackage{hyperref}
\usepackage{zx-calculus}
\hypersetup{
    colorlinks,
    linkcolor=blue,
    citecolor=blue,
    urlcolor=blue
}
\definecolor{amethyst}{HTML}{9063CD}

\newtheorem{thm}{Theorem}[section]

\newtheorem{obs}[thm]{Observation}

\newtheorem{construct}[thm]{Construction}

\newtheorem{example}[thm]{Example}

\usepackage{orcidlink}

\usepackage[ruled, linesnumbered]{algorithm2e}
\usepackage{algpseudocode}
\usepackage{algcompatible}
\algnewcommand{\To}{\textbf{To }}
\algnewcommand\Input{\item[\textbf{Input:}]}%
\algnewcommand\Output{\item[\textbf{Output:}]}%

\newlength\myindent
\usepackage{physics}

\usepackage{ulem}

\usepackage{colortbl}
\usepackage{tikz}
\usetikzlibrary{positioning,fit,backgrounds}
\usepackage{tikzit}

\tikzstyle{gate}=[shape=rectangle, text height=1.5ex, text depth=0.25ex, yshift=0.5mm, fill=white, draw=black, minimum height=5mm, yshift=-0.5mm, minimum width=5mm, font={\small}, tikzit category=circuit]
\tikzstyle{big gate}=[shape=rectangle, text height=1.5ex, text depth=0.25ex, yshift=0.5mm, fill=white, draw=black, minimum height=10mm, yshift=-0.5mm, minimum width=5mm, font={\small}, tikzit category=circuit]
\tikzstyle{Z dot}=[inner sep=0mm, minimum size=2mm, shape=circle, draw=black, fill={rgb,255: red,221; green,255; blue,221}, tikzit category=zx]
\tikzstyle{Z phase dot}=[minimum size=5mm, font={\footnotesize\boldmath}, shape=rectangle, rounded corners=2mm, inner sep=0.2mm, outer sep=-2mm, scale=0.8, tikzit shape=circle, draw=black, fill={rgb,255: red,221; green,255; blue,221}, tikzit draw=blue, tikzit category=zx]
\tikzstyle{X dot}=[Z dot, shape=circle, draw=black, fill={rgb,255: red,255; green,136; blue,136}, tikzit category=zx]
\tikzstyle{X phase dot}=[Z phase dot, tikzit shape=circle, tikzit draw=blue, fill={rgb,255: red,255; green,136; blue,136}, font={\footnotesize\boldmath}, tikzit category=zx]
\tikzstyle{hadamard}=[fill=yellow, draw=black, shape=rectangle, inner sep=0.6mm, minimum height=1.5mm, minimum width=1.5mm, tikzit category=zx]
\tikzstyle{paulibox}=[fill={rgb,255: red,221; green,221; blue,255}, draw=black, shape=rectangle, inner sep=0.6mm, minimum height=5mm, minimum width=5mm, font={\footnotesize}, text height=1.5ex, text depth=0.25ex, tikzit category=zx]
\tikzstyle{vertex}=[inner sep=0mm, minimum size=1mm, shape=circle, draw=black, fill=black, tikzit category=misc]
\tikzstyle{vertex set}=[inner sep=0mm, minimum size=1mm, shape=circle, draw=black, fill=white, font={\footnotesize\boldmath}, tikzit category=misc]
\tikzstyle{small black dot}=[fill=black, draw=black, shape=circle, inner sep=0pt, minimum width=1.2mm, tikzit category=circuit]
\tikzstyle{cnot ctrl}=[fill=black, draw=black, shape=circle, inner sep=0pt, minimum width=1.2mm, tikzit category=circuit]
\tikzstyle{cnot targ}=[fill=white, draw=white, shape=circle, tikzit category=circuit, label={center:$\oplus$}, inner sep=0pt, minimum width=2.1mm, tikzit fill={rgb,255: red,102; green,204; blue,255}, tikzit draw=black]
\tikzstyle{ket}=[fill=white, draw=black, shape=regular polygon, regular polygon sides=3, regular polygon rotate=-30, scale=0.7, inner sep=1pt, tikzit category=circuit, tikzit shape=rectangle, tikzit fill=green]
\tikzstyle{bra}=[fill=white, draw=black, shape=regular polygon, regular polygon sides=3, regular polygon rotate=30, scale=0.7, inner sep=1pt, tikzit category=circuit, tikzit shape=rectangle, tikzit fill=red]
\tikzstyle{scalar}=[shape=rectangle, text height=1.5ex, text depth=0.25ex, yshift=0.5mm, fill=white, draw=black, minimum height=5mm, yshift=-0.5mm, minimum width=5mm, font={\small}]
\tikzstyle{clabel}=[fill=white, draw=none, shape=rectangle, tikzit fill={rgb,255: red,56; green,255; blue,242}, font={\footnotesize}, inner sep=1pt, tikzit category=labels]
\tikzstyle{empty diagram}=[draw={gray!40!white}, dashed, shape=rectangle, minimum width=1cm, minimum height=1cm, tikzit category=misc]
\tikzstyle{amap}=[fill=white, draw=black, shape=NEbox, tikzit category=asymmetric, tikzit fill=yellow, tikzit shape=rectangle]
\tikzstyle{amap conj}=[fill=white, draw=black, shape=NWbox, tikzit category=asymmetric, tikzit fill=green, tikzit shape=rectangle]
\tikzstyle{amap adj}=[fill=white, draw=black, shape=SEbox, tikzit category=asymmetric, tikzit fill=red, tikzit shape=rectangle]
\tikzstyle{amap trans}=[fill=white, draw=black, shape=SWbox, tikzit category=asymmetric, tikzit fill=orange, tikzit shape=rectangle]
\tikzstyle{astate}=[fill=white, draw=black, shape=NEtriangle, tikzit category=asymmetric, tikzit shape=circle, tikzit fill=yellow]
\tikzstyle{astate conj}=[fill=white, draw=black, shape=NWtriangle, tikzit category=asymmetric, tikzit shape=circle, tikzit fill=green]
\tikzstyle{astate adj}=[fill=white, draw=black, shape=SEtriangle, tikzit category=asymmetric, tikzit shape=circle, tikzit fill=red]
\tikzstyle{astate trans}=[fill=white, draw=black, shape=SWtriangle, tikzit category=asymmetric, tikzit shape=circle, tikzit fill=orange]

\tikzstyle{blue dash edge}=[-, dashed, dash pattern=on 2pt off 0.5pt, thick, draw={rgb,255: red,68; green,136; blue,255}]
\tikzstyle{green dash edge}=[-, dashed, dash pattern=on 2pt off 0.5pt, thick, draw={rgb,255: red,144; green,238; blue,144}]
\tikzstyle{purple dash edge}=[-, dashed, dash pattern=on 2pt off 0.5pt, thick, draw={rgb,255: red,147; green,112; blue,219}]
\tikzstyle{box edge}=[-, dashed, dash pattern=on 2pt off 0.5pt, thick, draw={rgb,255: red,203; green,192; blue,225}]
\tikzstyle{brace edge}=[-, tikzit draw=blue, decorate, decoration={brace,amplitude=1mm,raise=-1mm}]
\tikzstyle{diredge}=[->]
\tikzstyle{double edge}=[-, double, shorten <=-1mm, shorten >=-1mm, double distance=2pt]
\tikzstyle{gray edge}=[-, {gray!60!white}]
\tikzstyle{pointer edge}=[->, very thick, gray]
\tikzstyle{boldedge}=[-, line width=1.6pt, shorten <=-0.17mm, shorten >=-0.17mm]
\tikzstyle{bidir edge}=[<->, very thick, draw={rgb,255: red,191; green,191; blue,191}]
\tikzstyle{new edge style 0}=[-, fill={rgb,255: red,169; green,94; blue,255}, draw=none, opacity=0.5]
\tikzstyle{new edge style 1}=[->, draw={rgb,255: red,51; green,24; blue,255}, very thick]
\tikzstyle{new edge style 2}=[-, draw=blue]

\definecolor{colorZhZ}{RGB}{255, 255, 255}
\definecolor{colorZhX}{RGB}{200, 200, 200}
\tikzset{
main node/.style={draw, thick, circle, inner sep=0.02cm, minimum size=0.4cm, font=\sffamily\scriptsize}, 
  label node/.style={thick, inner sep=0cm},
    zhZ4/.style n args={4}{
      zx@spider={zhNoPhaseZ}{zhShortZ}{zhLongZ}{stylePhaseInLabelZ}{#1}{#2}{#3}{#4}
    },
    zhX4/.style n args={4}{
      zx@spider={zhNoPhaseX}{zhShortX}{zhLongX}{stylePhaseInLabelX}{#1}{#2}{#3}{#4}
    },
    zhNoPhaseZ/.style={zxNoPhase,fill=colorZhZ},
    zhNoPhaseX/.style={zxNoPhase,fill=colorZhX},
    zhNoPhaseSmallZ/.style={zxNoPhaseSmall,fill=colorZhZ},
    zhNoPhaseSmallX/.style={zxNoPhaseSmall,fill=colorZhX},
    zhShortZ/.style={zxShort,fill=colorZhZ},
    zhShortX/.style={zxShort,fill=colorZhX},
    zhLongZ/.style={zxLong,fill=colorZhZ},
    zhLongX/.style={zxLong,fill=colorZhX},
}
\usepackage{xparse}

\NewDocumentCommand{\phaseone}{ m O{0} O{1} o }{%
  \begin{tikzpicture}[baseline={(0,-0.25)}]
    \begin{scope}[local bounding box=#1]
      \draw (0,0) node (0) [main node] {#2};
      \draw (1,0) node (1) [main node] {#3};
      \IfValueT{#4}{%
        \draw (.5,-.5) node [label node] {$t = #4$};
      }%
      \path (1) edge [in=30, out=60, loop, above right, inner sep=0.02cm] (1);
    \end{scope}
  \end{tikzpicture}%
}

\NewDocumentCommand{\phasezero}{ m O{0} O{1} o }{%
  \begin{tikzpicture}[baseline={(0,-0.25)}]
    \begin{scope}[local bounding box=#1]
      \draw (0,0) node (0) [main node] {#2};
      \draw (1,0) node (1) [main node] {#3};
      \IfValueT{#4}{%
        \draw (.5,-.5) node [label node] {$t = #4$};
      }%
      \path (0) edge [in=30, out=60, loop, above right, inner sep=0.02cm] (0);
    \end{scope}
  \end{tikzpicture}%
}

\NewDocumentCommand{\phaseboth}{ m O{0} O{1} o }{
  \begin{tikzpicture}[baseline={(0,-0.25)}]
    \begin{scope}[local bounding box=#1]
      \draw (0,0) node (0) [main node] {#2};
      \draw (1,0) node (1) [main node] {#3};
      \IfValueT{#4}{
        \draw (.5,-.5) node [label node] {$t = #4$};
      }
      \path (0) edge [in=30, out=60, loop, above right, inner sep=0.02cm] (0);
      \path (1) edge [in=30, out=60, loop, above right, inner sep=0.02cm] (1);
    \end{scope}
  \end{tikzpicture}
}

\NewDocumentCommand{\edgehor}{ m O{0} O{1} o }{
  \begin{tikzpicture}[baseline={(0,-0.25)}]
    \begin{scope}[local bounding box=#1]
      \draw (0,0) node (0) [main node] {#2};
      \draw (1,0) node (1) [main node] {#3};
      \IfValueT{#4}{
        \draw (.5,-.5) node [label node] {$t = #4$};
      }
      \path (0) edge [inner sep=0.02cm] (1);
    \end{scope}
  \end{tikzpicture}
}

\NewExpandableDocumentCommand{\zhZ}{O{}t*t-m}{
  \node[zx main node, zhZ4={#1}{\IfBooleanTF{#3}{-}{}}{\IfBooleanTF{#2}{*}{}}{#4}] {}; 
}
\NewExpandableDocumentCommand{\zhX}{O{}t*t-m}{
  \node[zx main node, zhX4={#1}{\IfBooleanTF{#3}{-}{}}{\IfBooleanTF{#2}{*}{}}{#4}] {}; 
}
\NewExpandableDocumentCommand{\zhH}{O{}t*t-m}{
  \node[zx main node, zhZ4={#1}{\IfBooleanTF{#3}{-}{}}{\IfBooleanTF{#2}{*}{}}{#4}, rectangle] {}; 
}

\newtheorem{observation}{Observation}

\usepackage{mathtools}
\usepackage{ctqw-diagrams}
\usepackage{etoolbox}
\newcounter{appobscounter}
\makeatletter
\newwrite\@obslog
\AtBeginDocument{%
  \setcounter{appobscounter}{0}%
  \gdef\@obskeylist{}%
  \IfFileExists{\jobname.obslog}{\input{\jobname.obslog}}{}%
  \immediate\openout\@obslog=\jobname.obslog\relax
}
\AtEndDocument{\immediate\closeout\@obslog}
\newcommand{\refobs}[1]{%
  \@ifundefined{obsseen@#1}{%
    \expandafter\gdef\csname obsseen@#1\endcsname{1}%
    \immediate\write\@obslog{\string\obsorder{#1}}%
  }{}%
  \ref{obs:gi:#1}%
}
\newcommand{\obsorder}[1]{%
  \@ifundefined{obsnum@#1}{%
    \stepcounter{appobscounter}%
    \expandafter\xdef\csname obsnum@#1\endcsname{\theappobscounter}%
    \xdef\@obskeylist{\@obskeylist,#1}%
  }{}%
}
\newcommand{\obsdef}[3]{%
  \expandafter\long\expandafter\gdef\csname obsbody@#1\endcsname{#3}%
  \expandafter\gdef\csname obstitle@#1\endcsname{#2}%
  \@ifundefined{obsregistered@#1}{%
    \expandafter\gdef\csname obsregistered@#1\endcsname{1}%
    \xdef\@obsregisteredlist{\@obsregisteredlist,#1}%
  }{}%
}
\gdef\@obsregisteredlist{}
\newcommand{\@emitobs}[1]{%
  \ifcsname obsregistered@#1\endcsname
    \@ifundefined{obsemitted@#1}{%
      \expandafter\gdef\csname obsemitted@#1\endcsname{1}%
      \@ifundefined{obsnum@#1}{%
        \stepcounter{appobscounter}%
        \expandafter\xdef\csname obsnum@#1\endcsname{\theappobscounter}%
      }{}%
      \par\vspace{1em}\noindent\rule{\linewidth}{0.4pt}\par
      \subsection*{Observation \csname obsnum@#1\endcsname: \csname obstitle@#1\endcsname}%
      \phantomsection
      \def\@currentlabel{\csname obsnum@#1\endcsname}%
      \hypertarget{obsanchor.app.\csname obsnum@#1\endcsname}{}%
      \def\@currentHref{obsanchor.app.\csname obsnum@#1\endcsname}%
      \label{obs:gi:#1}%
      \csname obsbody@#1\endcsname%
    }{}%
  \fi
}
\newcommand{\printallobs}{%
  \renewcommand{\do}[1]{\@emitobs{##1}}%
  \expandafter\docsvlist\expandafter{\@obskeylist}%
  \renewcommand{\do}[1]{\@ifundefined{obsnum@##1}{\@emitobs{##1}}{}}%
  \expandafter\docsvlist\expandafter{\@obsregisteredlist}%
}

\makeatother

\begin{document}

\preprint{APS/123-QED}

\title{Simplification Rules for Continuous-Time Quantum Walks on Dynamic Graphs}

\author{Mostafa Atallah$^{1,3}$\orcidlink{0009-0004-8187-6932}}
\author{Daniel Dilley$^2$\orcidlink{0000-0002-8821-4059}}
\author{Jishnu Mahmud$^{1}$\orcidlink{0009-0000-3650-8656}}
\author{Zain H. Saleem$^2$\orcidlink{0000-0002-8182-2764}}
\author{Rebekah Herrman$^1$\orcidlink{0000-0001-6944-4206}}\thanks{corresponding author}
\email{rherrma2@utk.edu}
\affiliation{$^1$Department of Industrial and Systems Engineering, University of Tennessee Knoxville, USA}
\affiliation{$^2$Mathematics and Computer Science Division, Argonne National Laboratory, Lemont, IL, USA}
\affiliation{$^3$Department of Physics, Faculty of Science, Cairo University, Giza 12613, Egypt}

\date{\today}

\begin{abstract}
Continuous-time quantum walks (CTQWs) on dynamic graphs realize quantum gates as sequences of time-evolving graph Hamiltonians, but naive constructions produce long sequences with redundancy. Simplification rules, which are graph rewrite rules that shorten a dynamic graph sequence while preserving the unitary it implements, are the CTQW analogue of circuit identities in the gate model. In this work, we give CTQW realizations of the standard single-qubit gates and introduce new graph rewrite rules. We demonstrate the simplification rules through worked circuit reductions and outline their use as transpilation primitives for converting between the circuit model and the dynamic graph framework.
\end{abstract}

\maketitle
\makeatother

\section{Introduction}\label{sec:intro}
A Continuous-Time Quantum Walk (CTQW) on a graph $G$ is a universal model of computation that is the quantum computing analogue to continuous-time random walks in classical computing \cite{childs2009universal, childs2004spatial, herrman2019continuous, farhi1998quantum}. In the CTQW paradigm, the dynamics of a walker on a graph $G$ with adjacency matrix $A_G$ can be described by $e^{-i A_G t}$ where $t \in \mathbb{R}^+$ is the propagation time. While they have a myriad of applications in diverse fields such as computer science, operations research, and biology \cite{chakraborty2020finding, osada2020continuous, apers2022quadratic, tanaka2022spatial, goldsmith2023link, moutinho2023quantum, marsh2020combinatorial, slate2021quantum, d2021protein, santiago2020quantum}, they are in general, difficult to implement in the circuit model of quantum computation. In fact, only CTQWs on some families of graphs can be exactly implemented in the circuit model \cite{qiang2016efficient, portugal2022implementation, adhikari10circuit, loke2017efficient, qu2022deterministic, douglas2009efficient}. Recent work has developed a new graph-based approach for approximating CTQWs in the circuit model of computation \cite{atallah2026simulating} that is inspired by CTQWs on dynamic graphs \cite{herrman2019continuous}. 

CTQWs on dynamic graphs provide a universal framework for quantum computation, where quantum gates are realized as sequences of time-evolving graph-based Hamiltonians~\cite{herrman2019continuous}. A CTQW on a dynamic graph sequence, $\{(G_k, t_k)\}_{k=1}^m$, is a quantum process that is described by $\prod_{k=1}^m e^{-i A_k t_k}$ where a Hamiltonian $A_k$ is the adjacency matrix of graph $G_k$ and $t_k$ is the associated propagation time. In contrast to CTQWs on static graphs, there are clear equivalences between CTQWs on dynamic graphs and gates in the circuit model, as well as between CTQWs on  dynamic graphs and gate-based quantum algorithms such as QAOA \cite{herrman2019continuous, childs2003universal, herrman2022relating, farhi2014quantum, marsh2018quantum}. 

Other research on CTQWs on dynamic graphs has extended the framework in several directions. For example, the authors of \cite{wong2019isolated} showed how adding isolated vertices can reduce the propagation time or the number of edges required when writing some dynamic graph sequences that are equivalent to gates in the circuit model. Other recent research has shown how to write an arbitrary gate in the dynamic graph framework as a sequence of at most three graphs \cite{adisa2021implementing}. However, two primary practical applications of the framework lie in state preparation and Hamiltonian simulation. In \cite{gonzales2025efficient}, the dynamic graph framework is used to motivate an efficient, sparse deterministic state preparation algorithm that uses fewer controlled gates than other sparse state preparation algorithms. While in \cite{atallah2026simulating}, the authors show how to approximate CTQWs by decomposing the underlying graph into matchings, converting each matching to a sequence of gates in the circuit model, and Trotterizing over the matchings. Notably, the algorithm does not require Pauli decomposition, only requires polynomial-time classical overhead, and requires fewer CX gates than Pauli decomposition when used to approximate a CTQW on sparse static graphs with particular structure. These works allude to the fact that CTQWs on dynamic graphs may be useful compiler primitives. 

However, in order for CTQWs on dynamic graphs to be useful compiler primitives, there is a need for more research in the field. In particular, the authors of \cite{herrman2022simplifying} show how one can simplify the dynamic graph sequences if the underlying graphs have specific properties. These simplification techniques are written in terms of graphs, as they can be written in concise pictures and are easy to conceptualize. The authors of the work state that their list of dynamic graph sequence simplifications is not comprehensive. In this work, we introduce dynamic graph sequences that are equivalent to phase gates and $n^{th}$ roots of unitaries in the circuit model, as well as five new dynamic graph sequence simplification techniques. These new gates and simplification techniques allow us to represent more algorithms in the dynamic CTQW framework and write them more compactly. The remainder of this work is organized as follows. In Sec.~\ref{sec:background}, we introduce relevant notation and provide a brief recap of previous dynamic graph simplification techniques. Then we introduce new dynamic graph sequences that are equivalent to a collection of gates in the circuit model of computing in Sec.~\ref{sec:gates}. Next, we introduce new dynamic graph simplification rules in Sec.~\ref{sec:usimplifications}. 
We conclude with a discussion in Sec.~\ref{sec:discussion}.

\section{Background}\label{sec:background}
In this section, we introduce the notation, summarize the CTQW on dynamic graph framework, and list previous CTQW on dynamic graph simplification rules.

\subsection{Notation}
An \textit{undirected graph} $G=(V,E)$ is a collection of vertices $V$ and edges $E$. In general, we will consider undirected graphs that do not contain multiedges, i.e. for any two vertices $u,v$ there exists at most one edge connecting $u$ and $v$; however, we do allow undirected graphs to have \textit{self-loops}, that is, an edge that connects a vertex to itself. In the dynamic CTQW framework, each graph will have $2^n$ vertices where $n \in \mathbb{N}$ is the number of qubits. Each vertex is labeled with a length $n$ bitstring and represents a computational basis state of the underlying Hilbert space. The \textit{adjacency matrix} of an undirected graph $G$, denoted $H_G$, is the symmetric matrix of $0$'s and $1$'s such that if $ij\in E$, then $H_{ij} = H_{ji} = 1$. Note that in the CTQW model of computation, either the adjacency matrix or Laplacian of a graph can be the Hamiltonian. In this work, we use the adjacency matrix formalism and thus denote adjacency matrices by $H$.

Throughout this work, $X_i$ ($Y_i$, $Z_i$) refers to the Pauli $X$- ($Y$-, $Z$-) gate acting on qubit $i$.  Note that we drop the subscript when considering one-qubit systems.

\subsection{Previous simplification rules}
For a given dynamic graph sequence, six simplification rules can be used to produce a new dynamic graph sequence that results in the same Hamiltonian dynamics as the original dynamic graph sequence. This resulting sequence contains either fewer graphs or shorter propagation time(s) of graphs \cite{herrman2022simplifying}.
\begin{observation}\label{obs:commute}
	If sequential graphs commute, their order in the dynamic graph can be swapped.
\end{observation}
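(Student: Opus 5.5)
The claim is that in a dynamic graph sequence $\{(G_k,t_k)\}_{k=1}^m$, two consecutive graphs whose adjacency matrices commute may be swapped without changing the implemented unitary. I will first make precise what ``sequential graphs commute'' means: for consecutive graphs $G_j, G_{j+1}$, the adjacency matrices satisfy $[H_{G_j}, H_{G_{j+1}}] = H_{G_j}H_{G_{j+1}} - H_{G_{j+1}}H_{G_j} = 0$. The goal is then
\begin{equation*}
e^{-iH_{G_{j+1}}t_{j+1}}\,e^{-iH_{G_j}t_j} = e^{-iH_{G_j}t_j}\,e^{-iH_{G_{j+1}}t_{j+1}}
\end{equation*}
for all real $t_j, t_{j+1}$. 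Since the whole sequence is the ordered product $\prod_k e^{-iH_{G_k}t_k}$, equality of these two adjacent factors gives equality of the full product, by associativity of matrix multiplication.

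To prove the displayed identity, I will write each exponential as its power series, $e^{-iHt} = \sum_{n\ge 0} (-it)^n H^n/n!$, which converges absolutely for finite matrices. If $AB=BA$, then induction gives $A^pB^q = B^qA^p$ for all $p,q\ge 0$. Multiplying the two absolutely convergent series term by term (Cauchy product) then shows that $e^{sA}$ and $e^{uB}$ commute for all scalars $s,u$.

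Alternatively, I could note that $H_{G_j}$ and $H_{G_{j+1}}$ are real symmetric and commuting, so they are simultaneously orthogonally diagonalizable. Their exponentials are then diagonal in a common eigenbasis, and diagonal matrices commute.

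The main point of care is interpretive rather than technical: ``commute'' must be read as commutation of the Hamiltonians (adjacency matrices), not merely of the unitaries at the specific times $t_j, t_{j+1}$. Only the former gives the rule for arbitrary propagation times, which is how the observation is used. If the weaker reading were intended, the rule would be tautological. So the proof records the assumption $[H_{G_j},H_{G_{j+1}}]=0$ explicitly, and repeated adjacent swaps extend the rule to any permutation within a block of pairwise commuting graphs.
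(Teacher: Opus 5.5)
Your proposal is correct. The paper gives no proof of this observation, since it is quoted as background from prior work. Your argument is the standard justification behind it: reading ``commute'' as $[H_{G_j},H_{G_{j+1}}]=0$ (which is how the paper uses commutation when it combines pairwise commuting subgraphs), the propagators $e^{-iH_{G_j}t_j}$ and $e^{-iH_{G_{j+1}}t_{j+1}}$ commute for all times, so adjacent factors of the ordered product may be exchanged.
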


\begin{observation}\label{obs:same}
	If $G_{\ell} = G_{\ell + 1}$, then $G_{\ell}$ can be performed for time $t_\ell+t_{\ell+1}$, modulo the period of graph $G_\ell$, and $G_{\ell + 1}$ can be omitted.
\end{observation}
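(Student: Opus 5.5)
The plan is to compute directly with the matrix exponential of the single Hamiltonian $H=H_{G_\ell}=H_{G_{\ell+1}}$. In the dynamic graph sequence the two consecutive factors contribute $e^{-iHt_{\ell+1}}e^{-iHt_\ell}$ to the product. The rest of the product is untouched, so it suffices to show that this two-factor block equals $e^{-iH\tau}$ for $\tau = (t_\ell+t_{\ell+1}) \bmod T$, where $T$ is the period of $G_\ell$.

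\textbf{Step 1.} Since $H$ commutes with itself, the exponentials combine exactly:
\[
e^{-iHt_{\ell+1}}e^{-iHt_\ell}=e^{-iH(t_\ell+t_{\ell+1})}.
\]
One can see this from the power series, or from the spectral decomposition $H=\sum_j \lambda_j P_j$, which gives $e^{-iHt}=\sum_j e^{-i\lambda_j t}P_j$. Replacing the two graphs by the single graph $G_\ell$ run for time $t_\ell+t_{\ell+1}$ therefore leaves the overall product unchanged.

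\textbf{Step 2.} Next I make the modular reduction precise. Define the period of $G_\ell$ as the least $T>0$ with $e^{-iHT}=I$, when such a $T$ exists. This happens exactly when all eigenvalue differences are commensurate, and I would adopt the convention that the rule is only invoked in that case. If global phases are ignored, one takes $e^{-iHT}=e^{i\phi}I$ instead.

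Write $t_\ell+t_{\ell+1}=qT+\tau$ with $q\in\mathbb{Z}_{\ge0}$ and $0\le\tau<T$. Then
\[
e^{-iH(t_\ell+t_{\ell+1})}=(e^{-iHT})^q e^{-iH\tau}=e^{-iH\tau},
\]
using Step 1 again to split the exponent. In the phase convention, the factor $e^{iq\phi}$ is a global phase and is harmless.

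\textbf{Step 3.} Substituting this back into the product $\prod_k e^{-iH_kt_k}$ gives a sequence with one fewer graph that implements the same unitary. This proves the observation.

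The only real subtlety is Step 2. That step requires fixing what "period" means, since a generic adjacency matrix need not be periodic, and deciding whether the equality holds exactly or only up to global phase. I would state the periodic hypothesis explicitly, and note that without it the rule still holds with the time $t_\ell+t_{\ell+1}$ left unreduced.
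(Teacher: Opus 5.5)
The paper gives no proof of this rule. It is listed among the previously known simplifications and attributed to \cite{herrman2022simplifying}, so there is no in-paper argument to compare against. Your argument is correct and is the natural justification: combine $e^{-iHt_{\ell+1}}e^{-iHt_\ell}$ into $e^{-iH(t_\ell+t_{\ell+1})}$ because $H$ commutes with itself, then drop whole periods. Making the meaning of ``period'' explicit is a real improvement over the bare statement.

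One side remark in your Step 2 is inaccurate, although the proof does not depend on it. An exact period, meaning $e^{-iHT}=I$ for some $T>0$, exists if and only if every eigenvalue of $H$ is an integer multiple of a common number. Equivalently, the nonzero eigenvalues \emph{themselves} must be pairwise commensurate. Commensurability of the eigenvalue \emph{differences} is the weaker condition, and it characterizes periodicity only up to a global phase.

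For example, take the two-vertex graph with edge $01$ and a self-loop on $0$. Its eigenvalues are $(1\pm\sqrt{5})/2$. It satisfies $e^{-iHT}\propto I$ at $T=2\pi/\sqrt{5}$, yet $e^{-iHT}\neq I$ for every $T>0$, because the ratio of its eigenvalues is irrational.

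Since you already treat both conventions, just attach the correct commensurability condition to each. For the single-loop, single-edge, and hypercube graphs that appear in this paper the spectra are integral, so $2\pi$ is an exact period there.
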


\begin{observation}\label{obs:pst}
	Suppose the dynamic graph has $2^n$ vertices, so the vertices can be labeled in binary. Suppose the dynamic graph contains a sequence of graphs $\{ G_\ell, G_{\ell+1}, \dots, G_m \}$, such that for each graph in this sequence, there is perfect state transfer between every pair of vertices whose binary representations differ in the same locations. Then, the sequence of graphs simply performs a sequence of perfect state transfers, so we can replace $G_\ell$ through $G_m$ with a single graph that performs the resulting perfect state transfers. Any phases can be adjusted using isolated vertices with self-loops.
\end{observation}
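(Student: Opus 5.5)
The plan is to show that each unitary $e^{-iH_{G_k}t_k}$, for $k=\ell,\dots,m$, is a monomial matrix, namely a permutation matrix times a diagonal phase matrix. Then their product is monomial too, so it can be realized by one graph whose edges are a perfect matching, together with self-loops and isolated looped vertices that supply the phases.

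\textbf{Step 1: each factor is monomial.} The hypothesis says that in $G_k$ there is perfect state transfer, at time $t_k$, between every pair of vertices whose labels differ in a fixed set of positions $S_k$. In other words, $e^{-iH_{G_k}t_k}\ket{x}=e^{i\phi_k(x)}\ket{x\oplus s_k}$ for every bitstring $x$, where $s_k$ is the indicator string of $S_k$. Because $x\mapsto x\oplus s_k$ is a bijection on the $2^n$ labels, the factor is exactly $P_{s_k}D_k$. Here $P_{s_k}$ is the permutation (a product of $X$'s on the qubits in $S_k$) and $D_k$ is a diagonal matrix of phases.

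\textbf{Step 2: compose the factors.} Conjugating a diagonal matrix by a permutation gives another diagonal matrix, since $P_s D P_s^{-1}$ is diagonal. So we can push all the diagonal factors to the right and obtain
\[
\prod_{k=\ell}^m e^{-iH_{G_k}t_k}=P_{s}D,\qquad s=s_\ell\oplus\cdots\oplus s_m,
\]
with $D$ diagonal. The net map therefore sends $\ket{x}$ to $e^{i\phi(x)}\ket{x\oplus s}$. This is a single perfect state transfer pattern between vertices differing in the positions of $s$, which justifies the first claim of the statement.

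\textbf{Step 3: realize $P_sD$ by one graph.} We take the graph $G$ whose edges form the perfect matching $\{x,x\oplus s\}$. On each edge, $e^{-iH t}$ acts as $\cos t\,I-i\sin t\,X$, so at $t=\pi/2$ it gives $-iX$ on every pair. This is perfect state transfer with a uniform phase $-i$. If $s=0$, $G$ has no edges.

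The remaining discrepancy is a diagonal phase $D'$, and this is the main obstacle. Self-loops only add diagonal entries to $H$, so a looped vertex evolves as $e^{-it}$. To produce an arbitrary phase $e^{i\theta(x)}$ on each vertex from loops with 0/1 weights, my first attempt will be to follow the paper's hint: append auxiliary isolated vertices with self-loops, or run a short follow-up graph with loops on selected vertices for times tuned to the needed phases. That yields a diagonal unitary whose entries are products of the factors $e^{-it}$.

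The step needing care is achieving pairwise-independent phases on the two vertices of each matched pair. Loops on a matched edge change the $2\times2$ block to $\begin{pmatrix}1&1\\1&0\end{pmatrix}$-type matrices, which no longer transfer perfectly. I would handle this by applying the loop graph as a separate diagonal stage. Alternatively, when the accumulated phases are uniform within each pair, which holds when each $G_k$ is itself a uniform matching evolution, I would absorb them into a global phase plus loops. I expect to present the phase correction as "adjusted using self-loops" in this sense, since the statement itself only asserts that phases can be fixed this way.
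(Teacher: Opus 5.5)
The paper gives no proof of this observation: it is quoted in the background section from \cite{herrman2022simplifying}, so your proposal has to stand on its own. Its core is sound. Under your reading, each $G_k$ sends every $\ket{x}$ to a phase times $\ket{x\oplus s_k}$ at time $t_k$. Steps 1 and 2 then correctly give a product of the form $P_sD$ with $s=s_\ell\oplus\cdots\oplus s_m$, and the perfect matching $\{x,x\oplus s\}$ run for $t=\pi/2$ supplies $-iP_s$. That reading is also the one the statement needs. If graphs were only required to transfer some pairs (CNOT-type graphs with the other vertices isolated), two of them can compose to a $3$-cycle on basis states, which no single graph can produce (by the symmetry argument below).

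The flaw is your opening claim that $P_sD$ is realized by \emph{one} graph made of the matching plus self-loops and isolated looped vertices. This is false in general, so replace it rather than hedge in Step 3. For real symmetric $H$ one has $(e^{-iHt})^T=e^{-iHt}$. Hence a single graph can yield $P_sD$, even up to global phase, only if $D_x=D_{x\oplus s}$ for all $x$. Moreover, when $s\neq 0$ every vertex must move, so no vertex is isolated.

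Here is an example. Let $G_1$ have edges $00$--$10$, $01$--$11$ and loops on $01,11$. Let $G_2$ have edges $00$--$01$, $10$--$11$ and loops on $10,11$. Run both for $t=\pi/2$. Each satisfies the hypothesis, yet their product sends $\ket{01}\mapsto\ket{10}$ but $\ket{10}\mapsto-\ket{01}$, so it is not symmetric. The correct conclusion is one PST graph together with separate loop-only graphs, which is your Step 3 fallback.

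Three further fixes are needed there.
\begin{enumerate}
\item A loop-only graph run for one time $t$ produces only the two diagonal values $1$ and $e^{-it}$. A correction with $r$ distinct phases therefore needs in general several such graphs (at most $r-1$ up to global phase), not one short follow-up graph.
\item Appending auxiliary isolated vertices does nothing. Vertices disconnected from the $2^n$ computational vertices cannot change their phases; the loops must sit on the computational vertices themselves.
\item Your ``alternatively'' case is simpler than stated. If every $G_k$ is a loop-free matching, each factor is a scalar times $P_{s_k}$, so the product is a global phase times $P_s$ and no loops are needed.
\end{enumerate}
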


\begin{observation}\label{obs:comp}
	Consider sequential graphs $G_\ell$ and $G_{\ell+1}$ with respective adjacency matrices $A_{\ell}$ and $A_{\ell + 1}$. Assume $G_{\ell+1}$ is a subgraph of the complement of $G_\ell$, and furthermore $G_{\ell+1}$ only contains edges or self-loops on vertices that do not have edges or self-loops in $G_\ell$. Then, if $\norm{A_\ell} = \norm{A_{\ell+1}}$, then $G_\ell$ and $G_{\ell+1}$ can be combined into one dynamic graph whose adjacency matrix is $A_\ell + A_{\ell+1}$, and if $t_\ell = t_{\ell+1}$, the evolution time of this graph is $t_\ell$. If $t_\ell \neq t_{\ell+1}$, the graphs can still be combined, but an additional graph will be needed to finish the propagation of the graph with the longer time. This results in a shorter overall time but the same number of graphs in the sequence.
\end{observation}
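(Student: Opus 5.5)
The key structural fact is that the hypotheses force $A_\ell$ and $A_{\ell+1}$ to act on disjoint sets of vertices. Let $S_\ell \subseteq V$ be the vertices incident to some edge or self-loop of $G_\ell$, and $S_{\ell+1}$ the same for $G_{\ell+1}$. By assumption $S_\ell \cap S_{\ell+1} = \emptyset$. Ordering the computational basis so that $S_\ell$ comes first, then $S_{\ell+1}$, then the remaining vertices, the two adjacency matrices become block diagonal:
\begin{equation*}
A_\ell = B \oplus 0 \oplus 0, \qquad A_{\ell+1} = 0 \oplus C \oplus 0 .
\end{equation*}
From this I would read off $A_\ell A_{\ell+1} = A_{\ell+1} A_\ell = 0$. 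In particular the two graphs commute, which is consistent with Observation~\ref{obs:commute}.

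\textbf{Equal times.} Since the matrices commute, $e^{-iA_{\ell+1}t}e^{-iA_\ell t} = e^{-i(A_\ell + A_{\ell+1})t}$. The matrix $A_\ell + A_{\ell+1}$ is symmetric and $0/1$, because the edge sets are disjoint. It is therefore the adjacency matrix of the union graph $G_\ell \cup G_{\ell+1}$. So the two graphs can be replaced by this single graph run for time $t_\ell$.

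\textbf{Unequal times.} Say $t_\ell < t_{\ell+1}$. Write $e^{-iA_{\ell+1}t_{\ell+1}} = e^{-iA_{\ell+1}(t_{\ell+1}-t_\ell)}\,e^{-iA_{\ell+1}t_\ell}$, and apply the equal-time case to the first part. This gives the union graph for time $t_\ell$, followed by $G_{\ell+1}$ for time $t_{\ell+1}-t_\ell$. The number of graphs stays at two, and the total time drops from $t_\ell + t_{\ell+1}$ to $\max(t_\ell,t_{\ell+1})$. Commutativity means the leftover graph may be placed before or after the union graph. The case $t_\ell > t_{\ell+1}$ is symmetric.

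\textbf{Role of the norm condition.} Commutation and disjoint support already give exactness, so I expect the only delicate point to be interpreting $\norm{A_\ell} = \norm{A_{\ell+1}}$. My plan is to note that the spectral norm of a block-diagonal sum is $\norm{A_\ell + A_{\ell+1}} = \max(\norm{A_\ell},\norm{A_{\ell+1}})$. Under the hypothesis this common value is preserved, so the combined graph has the same energy scale and no longer maximal propagation time or speed than its parts. I would state this as a remark rather than a necessity: the identity itself holds for any norms. The main care is in that bookkeeping, and in making sure self-loops (diagonal entries) are covered by the same disjointness argument, which they are.
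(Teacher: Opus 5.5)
The paper gives no proof of this observation. It is recalled in the background section as one of the earlier rules from \cite{herrman2022simplifying}, so there is no in-paper argument to compare against. Your proof is correct and is the natural one. The vertex-disjointness hypothesis gives $A_\ell A_{\ell+1}=A_{\ell+1}A_\ell=0$, so the exponentials merge. The sum $A_\ell+A_{\ell+1}$ is the $0/1$ adjacency matrix of the union graph. Splitting the longer evolution then handles $t_\ell\neq t_{\ell+1}$, with two graphs and total time $\max(t_\ell,t_{\ell+1})$.

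The one point to sharpen is the role of $\norm{A_\ell}=\norm{A_{\ell+1}}$. You are right that it is superfluous under the convention $\prod_k e^{-iA_kt_k}$ stated in the introduction. However, the rules inherited from \cite{herrman2022simplifying} use the normalized evolution $e^{-iAt/\norm{A}}$; see Observations~\ref{obs:singletons} and~\ref{obs:repeatedgecommute}. In that convention the hypothesis is load-bearing, not a side remark, and your block-diagonal fact $\norm{A_\ell+A_{\ell+1}}=\max(\norm{A_\ell},\norm{A_{\ell+1}})$ is exactly what is needed. With common norm $c$ it gives
\[
e^{-iA_{\ell+1}t/c}\,e^{-iA_\ell t/c}=e^{-i(A_\ell+A_{\ell+1})t/\norm{A_\ell+A_{\ell+1}}},
\]
so the union graph runs for the same time $t_\ell$.

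If the norms differed, running the union graph for normalized time $s$ would rescale both blocks by $1/\max(\norm{A_\ell},\norm{A_{\ell+1}})$ rather than by their own norms. The claim that the combined evolution time is $t_\ell$ would then fail in general. Adding this line makes your argument cover both conventions the paper mixes.
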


\begin{observation}\label{obs:singletons}
	If a vertex $v$ is propagated as a looped singleton in $G_\ell$ for time $t_\ell$, and vertex $v$ is not adjacent to any other vertices in $G_j$ for all $a \leq j \leq b$, we can instead propagate vertex $v$ as a looped singleton in any $G_j$ for time $t_j$ between $G_a$ and $G_b$ that satisfies $t_j/\norm{A_j} = t_\ell/\norm{A_\ell}$. If we choose to propagate it as a looped singleton in a graph $G_j$ where it is already connected as a looped singleton, we evolve the singleton by time $2t_j$, modulo the period of $G_j$.
\end{observation}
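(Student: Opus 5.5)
The plan is to reduce the statement to commuting matrix exponentials. A looped singleton $v$ contributes to the adjacency matrix only through the diagonal entry $(A_\ell)_{vv}=1$, so $e^{-iA_\ell t_\ell}$ acts on the basis state $\ket{v}$ as the phase $e^{-it_\ell}$.

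Here the propagation time is read relative to the normalisation $\norm{A_\ell}$, so the phase is $e^{-i t_\ell/\norm{A_\ell}}$ when the Hamiltonian is normalised. The hypothesis $t_j/\norm{A_j}=t_\ell/\norm{A_\ell}$ is exactly what makes the relocated singleton pick up the same phase.

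First, I would decompose $A_\ell = A_\ell' + E_{vv}$, where $E_{vv}=\ketbra{v}{v}$ and $A_\ell'$ is the adjacency matrix of $G_\ell$ with $v$ deleted (or left isolated and unlooped). Because $v$ is a singleton, $A_\ell'$ has zero row and column at $v$, so $[A_\ell',E_{vv}]=0$. Hence
\begin{equation*}
e^{-iA_\ell t_\ell}=e^{-iA_\ell' t_\ell}\,e^{-iE_{vv}t_\ell}, \qquad e^{-iE_{vv}t_\ell} = I + (e^{-it_\ell}-1)E_{vv}.
\end{equation*}

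Second, I would show that $E_{vv}$ commutes with every $A_j$ for $a\le j\le b$. Since $v$ has no edges to other vertices in $G_j$, row and column $v$ of $A_j$ vanish except possibly the diagonal entry, and a matrix of that form commutes with $E_{vv}$. The phase factor $e^{-iE_{vv}t_\ell}$ therefore commutes with each $e^{-iA_jt_j}$. Applying Observation~\ref{obs:commute} repeatedly, I can move it through the sequence from position $\ell$ to position $j$ without changing the overall unitary.

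Third, I would absorb the moved phase into $G_j$:
\begin{itemize}
\item If $v$ is unlooped in $G_j$, the product $e^{-iA_jt_j}e^{-iE_{vv}t_\ell}$ equals $e^{-i(A_j+E_{vv})t_j}$, because the two terms commute and the time-matching condition gives the same phase. This is $G_j$ with $v$ now a looped singleton.
\item If $v$ is already looped in $G_j$, the phases add. This gives phase $e^{-2it_j}$ on $\ket{v}$, i.e.\ evolving the singleton for time $2t_j$. Reducing modulo the period then follows from Observation~\ref{obs:same} applied to the one-vertex component.
\end{itemize}

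The main obstacle is the normalisation bookkeeping. Removing or adding a loop can in principle change $\norm{A_\ell}$ or $\norm{A_j}$, which would rescale the remaining part of the graph. I would address this by noting that a looped singleton contributes the eigenvalue $1$, while any graph containing an edge or loop has $\norm{A}\ge 1$. So, provided the graph is not reduced to the singleton alone, the norm is unchanged. I would state this as the standing convention under which the time condition is interpreted.
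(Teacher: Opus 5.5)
The paper gives no proof of this statement. It is one of the six rules recalled in the background section from \cite{herrman2022simplifying}, so there is no in-paper argument to compare against. Judged on its own, your split--commute--merge argument is correct. $[A_j,\ketbra{v}{v}]=0$ whenever $v$ has no edges to other vertices in $G_j$, so the phase on $\ket{v}$ can be detached from $G_\ell$ and carried to any position $j\in[a,b]$ (with $\ell\in[a,b]$ implicit). Your norm bookkeeping is also the right one: a nonzero symmetric $0/1$ matrix has spectral norm at least $1$, and adding or deleting the $1\times 1$ block at an isolated $v$ changes the norm only through $\max(\cdot,1)$.

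The one thing to repair is notational, but it matters. Your displays use $e^{-iAt}$, whereas the hypothesis $t_j/\norm{A_j}=t_\ell/\norm{A_\ell}$ only yields equal phases under the normalised evolution $e^{-iAt/\norm{A}}$. Read literally, your merge $e^{-iA_jt_j}e^{-i\ketbra{v}{v}t_\ell}=e^{-i(A_j+\ketbra{v}{v})t_j}$ would need $t_j\equiv t_\ell \pmod{2\pi}$ instead. Write every exponent as $t/\norm{A}$:
\begin{itemize}
\item The split becomes $e^{-iA_\ell t_\ell/\norm{A_\ell}}=e^{-iA_\ell' t_\ell/\norm{A_\ell}}\,e^{-i\ketbra{v}{v}\,t_\ell/\norm{A_\ell}}$. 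Then $\norm{A_\ell'}=\norm{A_\ell}$ identifies the first factor with $G_\ell$ minus the loop, run for time $t_\ell$.
\item The merge uses $\norm{A_j+\ketbra{v}{v}}=\norm{A_j}$. This holds because $A_j\neq 0$, which is needed anyway for the ratio to be defined.
\end{itemize}
The case you exclude, where $G_\ell$ consists of the looped singleton alone, is harmless. There $A_\ell'=0$ and $G_\ell$ is simply deleted, which is the typical use of the rule. For the already-looped case, the reduction is just the $2\pi\norm{A_j}$-periodicity of the phase $e^{-i\tau/\norm{A_j}}$ on $\ket{v}$. Appealing to Observation~\ref{obs:same} for a single component is unnecessary.
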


\begin{observation}\label{obs:hypercube}
	We can use uniform mixing on the hypercube to apply the Hadamard gate to multiple qubits in parallel.
\end{observation}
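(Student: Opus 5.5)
The plan is to show that a single graph, the $n$-dimensional hypercube $Q_n$ run for time $t=\pi/4$ with suitable self-loop/phase corrections, realizes $H^{\otimes n}$ up to a global phase. First, label the $2^n$ vertices by bitstrings so that $x,y$ are adjacent exactly when they differ in one bit. Then the adjacency matrix decomposes as a sum of commuting single-qubit terms,
\begin{equation*}
A_{Q_n} = \sum_{i=1}^n X_i ,
\end{equation*}
where $X_i$ acts on qubit $i$. Because the $X_i$ commute pairwise, the evolution factorizes:
\begin{equation*}
e^{-iA_{Q_n}t} = \prod_{i=1}^n e^{-iX_i t} = \bigotimes_{i=1}^n \bigl(\cos t\, I - i\sin t\, X\bigr).
\end{equation*}
This is the parallel step, and the same reasoning is behind Observation~\ref{obs:commute}.

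Next, take $t=\pi/4$. Each factor becomes $\tfrac{1}{\sqrt2}(I-iX)$, which has uniform-magnitude entries; this is uniform mixing on the hypercube. The factor is not yet $H=\tfrac{1}{\sqrt2}(X+Z)$, so I need single-qubit phase corrections. One identity is
\begin{equation*}
H = e^{i\pi/4}\, S\, e^{-i\pi X/4}\, S ,
\end{equation*}
with $S=\mathrm{diag}(1,i)$, up to a global phase that I will check explicitly. Each $S$ is diagonal, so it can be implemented as a graph consisting only of looped singletons on the vertices whose qubit-$i$ bit is $1$, propagated for time $-\pi/2$ modulo $2\pi$, i.e. for time $3\pi/2$. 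For all qubits at once, $S^{\otimes n}$ is a diagonal unitary with phase $i^{|x|}$ on vertex $x$. Its factor $e^{-i t\,\mathrm{diag}}$ can be produced by a single graph with self-loops, possibly with different times handled through Observation~\ref{obs:singletons}, or else written as a product of $n$ commuting self-loop graphs that are then merged using Observations~\ref{obs:commute} and~\ref{obs:comp}.

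The resulting sequence is diagonal phase graph, then $(Q_n,\pi/4)$, then diagonal phase graph, which equals $H^{\otimes n}$ up to global phase. I will verify the identity on one qubit by multiplying out the $2\times2$ matrices and then tensor it. The main obstacle I expect is bookkeeping of the phase graphs. Self-loops in the adjacency formalism only add $+1$ diagonal entries, so each vertex gets the phase $e^{-it}$ where $t$ is the shared propagation time. Realizing the vertex-dependent phases $i^{|x|}$ in a single graph is therefore not automatic. I would first try to write $S^{\otimes n}=\prod_i S_i$, with each $S_i$ a looped-singleton graph, and accept $n$ commuting graphs. Only afterwards would I attempt compression via Observation~\ref{obs:same}, or note that the paper's statement only requires parallel uniform mixing, which the hypercube step alone already establishes.
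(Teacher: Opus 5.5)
The paper gives no proof of this observation. It is one of the background rules quoted in Sec.~\ref{sec:background} from \cite{herrman2022simplifying}, so there is no in-paper argument to compare against. Your main line is correct and is the natural justification. Since $A_{Q_n}=\sum_i X_i$ has commuting summands, $(Q_n,\pi/4)$ implements $\bigl(\tfrac{1}{\sqrt2}(I-iX)\bigr)^{\otimes n}$. Sandwiching it between two copies of $S^{\otimes n}$ gives $H^{\otimes n}$, where each copy is realized by the $n$ commuting loop graphs $S_i$ (loops on the vertices whose bit $i$ is $1$, run for time $3\pi/2$).

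No global phase arises at all, since $S\,e^{-i\pi X/4}\,S=H$ exactly. This is the loop--edge--loop sequence $G_1G_2G_3$ of Fig.~\ref{fig:HZH}. Your prefactor $e^{i\pi/4}$ is therefore wrong; taken literally, it would leave you with $e^{in\pi/4}H^{\otimes n}$.

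Three of your side remarks should be dropped or repaired.

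\textbf{No single loop graph gives $S^{\otimes n}$.} For $n\ge 2$ this fails with or without Observation~\ref{obs:singletons}, and even up to global phase. A loop graph run for one time $t$ gives each vertex a phase of either $1$ or $e^{-it}$, whereas $i^{|x|}$ takes at least three distinct values.

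\textbf{Observation~\ref{obs:comp} cannot merge the $S_i$.} That rule requires the loops of the second graph to sit on vertices carrying no loop in the first, and the $S_i$ overlap (e.g.\ at $11\cdots1$). If you want compression, use the fact that $i^{|x|}$ is the product of a factor $i$ on odd-weight vertices and a factor $-1$ on vertices with $|x|\equiv 2,3 \pmod 4$. Loops on odd-weight vertices for time $3\pi/2$, followed by loops on the latter set for time $\pi$, give $S^{\otimes n}$ with two graphs. This yields $H^{\otimes n}$ with at most five graphs for every $n$, instead of $2n+1$.

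\textbf{The hypercube step alone does not suffice.} Your fallback claim is false: $\bigl(\tfrac{1}{\sqrt2}(I-iX)\bigr)^{\otimes n}$ is not proportional to $H^{\otimes n}$. The diagonal loop layers are an essential part of the claim; the hypercube supplies only the parallel mixing factor.
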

Importantly, the authors of \cite{herrman2022simplifying} note that there may be other simplification rules. In Sec.~\ref{sec:gates}, we present dynamic graph equivalents of some gates that have not been documented previously, and in Sec.~\ref{sec:usimplifications}, we present new simplification rules.

\section{New gates}\label{sec:gates}
In this section, we introduce dynamic CTQWs that are equivalent to some elementary gates. Note that while some of these observations may seem obvious, inserting a set of gates equivalent to the identity gate can be useful when simplifying CTQW graphs. The phase gates and their conjugate transposes presented here are intuitive consequences of the previously defined $Z, S,$ and $T$ gates, while the Identity and $n-$th root of an arbitrary unitary dynamic graph equivalents require more derivation.

\subsection{Phase gates}\label{sec:phasegates}
We first consider the identity gate, since it serves as a useful building block for later simplifications. Although the identity is also equivalent to an empty dynamic graph sequence and a dynamic graph sequence consisting only of self-loops, each of which is implemented for a time $t = 2k\pi, (k \in \mathbb{Z})$ \cite{wong2019isolated}, it can also be written as a sequence of two graphs, which may aid in reduction in some dynamic graph sequences.

\begin{construct}\label{def:I}
    The identity gate is equivalent to the dynamic graph sequence $\{(A_1, t_1), (A_2,t_2)\}$ where $A_1$ is a two-vertex graph with an edge connecting both vertices, $A_2$ is a two-vertex graph with self-loops on both vertices, and $t_1=t_2=\pi$, as in Fig.~\ref{fig:Identitygate}. 
    Furthermore, each ordered pair  $(A_1, t_1), (A_2,t_2)$ corresponds to an identity gate with a global phase of $-1$.
\end{construct}
\begin{figure}
\centering
\begin{tikzpicture}[scale=0.8, every node/.style={transform shape}, thick, 
  main node/.style={draw, circle, inner sep=0.02cm, minimum size=0.4cm, font=\sffamily\scriptsize}, 
  label node/.style={inner sep=0cm}]

\begin{scope}[xshift=0cm]
  \draw (0,0) node (1) [main node] {0};
  \draw (1,0) node (2) [main node] {1};
  
  \draw (.5,-1) node [label node] {$t = \pi$};
  \draw (.5,.75) node [label node] {$G_1$};
  
  \path (2) edge [in=30, out=60, loop, above right, inner sep=0.02cm] (2);
   \path (1) edge [in=30, out=60, loop, above right, inner sep=0.02cm] (1);
\end{scope}

\begin{scope}[xshift=2.2cm]
  \draw (0,0) node (1) [main node] {0};
  \draw (1,0) node (2) [main node] {1};
  
  \draw (.5,-1) node [label node] {$t = \pi$};
  \draw (.5,.75) node [label node] {$G_2$};
  
  \draw (1) -- (2);
\end{scope}

\end{tikzpicture}
\caption{Identity gate as a dynamic graph sequence. Note each graph and corresponding time in this sequence corresponds to a $-I$ gate.}\label{fig:Identitygate}
\end{figure}
\begin{proof}
   We first begin with the first construction. It is easy to check that
    \begin{equation*}
        e^{-iA_1 \pi} = \cos{\pi}(\ketbra{0}{0} + \ketbra{1}{1}) -i\sin{\pi}(\ketbra{0}{1} + \ketbra{1}{0}) = -I
    \end{equation*}
    and similarly
        \begin{equation*}
        e^{-iA_2 \pi} = (\cos{\pi}-i\sin{\pi})(\ketbra{0}{0} + \ketbra{1}{1})= -I.
    \end{equation*}
    Furthermore, $(-I)^2 = I$.
\end{proof}
These graphs can also be used to construct the $iI$ and $-iI$ gates.

\begin{construct}\label{def:imaginaryI}
The $iI$ gate is equivalent to the dynamic graph sequence $\{(A_1, \frac{(4k+3)\pi}{2})\}$ where $A_1$ is the adjacency matrix of the graph on two vertices with self-loops on both vertices and the $-iI$ gate is equivalent to the dynamic graph sequence $\{(A_1, \frac{(4k+1)\pi}{2})\}$ . 
\end{construct}
\begin{proof}
    Note that
    \begin{equation*}
        e^{-iA_1 \frac{(4k+3)\pi}{2}} = e^{-iA_1 (2k\pi+\frac{3\pi}{2})}= \cos{\frac{3\pi}{2}}(\ketbra{0}{0} + \ketbra{1}{1}) -i\sin{\frac{3\pi}{2}}(\ketbra{0}{1} + \ketbra{1}{0}) = iI
    \end{equation*}
    and similarly
   \begin{equation*}
        e^{-iA_1 \frac{(4k+1)\pi}{2}} = e^{-iA_1 (2k\pi+\frac{\pi}{2})}= \cos{\frac{\pi}{2}}(\ketbra{0}{0} + \ketbra{1}{1}) -i\sin{\frac{\pi}{2}}(\ketbra{0}{1} + \ketbra{1}{0}) = -iI.
    \end{equation*}
\end{proof}

Next, we construct the dynamic CTQW equivalent of the $P(\theta)$ gate, where
  \begin{align*}
        P(\theta) &= \begin{pmatrix}
            1 & 0 \\
            0 & e^{i \theta} \end{pmatrix}.
    \end{align*}
\begin{construct}
    The $P(\theta)$ gate is equivalent to the dynamic graph sequence $\{(A_1, t_1)\}$ where $A_1$ is a two-vertex graph with a self-loop on the vertex represented by the bitstring $1$ and $t_1=2\pi-\theta$, as in Fig.~\ref{fig:Sgate}.
\end{construct}

\begin{proof}
    Note that
    \begin{equation*}
        e^{-iA_1 (2\pi-\theta)} = e^{-iA_1 (-\theta)}= e^{iA_1 \theta} = \ketbra{0}{0} + e^{i \theta}\ketbra{1}{1}.
    \end{equation*}
\end{proof}

\begin{figure}
\centering
\begin{tikzpicture}[scale=0.8, every node/.style={transform shape}, thick, 
  main node/.style={draw, circle, inner sep=0.02cm, minimum size=0.4cm, font=\sffamily\scriptsize}, 
  label node/.style={inner sep=0cm}]

\begin{scope}[xshift=0cm]
  \draw (0,0) node (1) [main node] {0};
  \draw (1,0) node (2) [main node] {1};
  
  \draw (.5,-1) node [label node] {$t = 2\pi - \theta$};
  \draw (.5,.75) node [label node] {$G_1$};
  
  \path (2) edge [in=30, out=60, loop, above right, inner sep=0.02cm] (2);
\end{scope}

\end{tikzpicture}
\caption{The $P(\theta)$ gate as a dynamic graph sequence.}\label{fig:Sgate}
\end{figure}

\begin{construct}\label{obs:inversephase}
    The dynamic graph sequence that implements $P^\dag (\theta)$ is $\{(A_1, t_1)\}$ where $A_1$ is the graph with two vertices and a single self-loop on the vertex denoted by the bitstring $1$ and $t_1 = \theta$. 
\end{construct}
\begin{proof}
    \begin{equation*}
        e^{-iA_1 (\theta)} =  \ketbra{0}{0} + e^{-i \theta}\ketbra{1}{1}.
    \end{equation*}
\end{proof}

Note that we can immediately write $S^{\dagger}$ and $T^{\dagger}$ with this construction.

\begin{construct}\label{obs:inverseS}
    The dynamic graph sequence that implements $S^\dagger$ is $\{(A_1, t_1)\}$ where $A_1$ is the graph with a single self-loop on the vertex denoted by the bitstring $1$ and $t_1 = \pi/2$.  
\end{construct}

\begin{construct}\label{obs:inverseT}
    The dynamic graph sequence that implements $T^\dagger$ is $\{(A_1, t_1)\}$ where $A_1$ is the graph with a single self-loop on the vertex denoted by the bitstring $1$ and $t_1 = \pi/4$.  
\end{construct}

All generalized phase gates are summarized in Table~\ref{tab:phase_gates}.

\begin{table}[ht]
\centering
\renewcommand{\arraystretch}{3.5}
\begin{tabular}{c|c|c|l}
\textbf{Gate} & \textbf{Matrix} & \textbf{CTQW Graph} & \textbf{Source} \\ \hline
$Z$ & $\begin{psmallmatrix} 1 & 0 \\ 0 & -1 \end{psmallmatrix}$ & $\Gloop{t=\pi}$ & Wong (2019) \cite{wong2019isolated}
\\
$S$ & $\begin{psmallmatrix} 1 & 0 \\ 0 & i \end{psmallmatrix}$ & $\Gloop{t=\frac{3\pi}{2}}$ & Wong (2019) \cite{wong2019isolated}
\\
$T$ & $\begin{psmallmatrix} 1 & 0 \\ 0 & e^{i\pi/4} \end{psmallmatrix}$ & $\Gloop{t=\frac{7\pi}{4}}$ & Wong (2019) \cite{wong2019isolated}
\\
\hline
$P(\lambda)$ & $\begin{psmallmatrix} 1 & 0 \\ 0 & e^{i\lambda} \end{psmallmatrix}$ & $\Gloop{t=2\pi{-}\lambda}$ & Wong (2019) \cite{wong2019isolated}; Gonzales et al.~(2025) \cite{gonzales2025efficient} \\
$P^{\dagger}(\lambda)$ & $\begin{psmallmatrix} 1 & 0 \\ 0 & e^{-i\lambda} \end{psmallmatrix}$ & $\Gloop{t=\lambda}$ & Derived in~\ref{obs:inversephase}\\ 
$S^\dagger$ & $\begin{psmallmatrix} 1 & 0 \\ 0 & -i \end{psmallmatrix}$ & $\Gloop{t=\frac{\pi}{2}}$ & Derived in~\ref{obs:inverseS} \\
$T^\dagger$ & $\begin{psmallmatrix} 1 & 0 \\ 0 & e^{-i\pi/4} \end{psmallmatrix}$ & $\Gloop{t=\frac{\pi}{4}}$ & Derived in~\ref{obs:inverseT} \\
\end{tabular}
\caption{Generalized phase gates: matrix definition, CTQW graph, and source. All entries are realized by a single self-loop graph at an appropriate time.}
\label{tab:phase_gates}
\end{table}
\subsection{Rotation gates}\label{sec:rotationgates}

In this subsection, we present CTQW implementations of the continuously parameterized rotation gates $R_y(\alpha)$ and $R_z(\alpha)$. Note that $R_x(\alpha)$ has been defined in \cite{gonzales2025efficient}. $R_z$ requires two graphs (one self-loop on each vertex) for an exact, global-phase-free implementation while $R_y$ uses a three-graph $S R_x(\alpha) S^\dagger$ sequence. All three rotation gates $R_x(\alpha)$, $R_y(\alpha)$, and $R_z(\alpha)$ together generate $SU(2)$ via the ZYZ decomposition. While general single qubit unitaries can be implemented using the construction from \cite{adisa2021implementing}, we state the below rotation gates explicitly, as the ZYZ decomposition will be useful for deriving the $n-$th root of a unitary later in this section. Before deriving the equivalent CTQW on dynamic graph sequences, recall that

\begin{equation*}
    R_y(\alpha) \;=\; e^{-i\alpha Y/2} \;=\; \begin{pmatrix} \cos(\alpha/2) & -\sin(\alpha/2) \\ \sin(\alpha/2) & \cos(\alpha/2) \end{pmatrix}
\end{equation*}
and 

\[
R_z(\alpha) \;=\; e^{-i\alpha Z/2} \;=\; \begin{pmatrix} e^{-i\alpha/2} & 0 \\ 0 & e^{i\alpha/2} \end{pmatrix}.
\]

\begin{construct}
The $R_y(\alpha)$ gate can be decomposed as $R_y(\alpha) = R_z(\frac{\pi}{2})\, R_x(\alpha)\, R_z(-\frac{\pi}{2}) = S \cdot R_x(\alpha) \cdot S^\dagger$ since $S = e^{i \frac{\pi}{4}} R_z(\frac{\pi}{2})$ and $S^\dagger = e^{-i \frac{\pi}{4}} R_z(-\frac{\pi}{2})$. This is equivalent to the dynamic graph sequence $\{(A_1, t_1), (A_2,t_2), (A_3,t_3)\}$ where $A_1 = A_3$ is a two-vertex graph with a self-loop on $\ket{1}$, $A_2$ is a two-vertex graph with an edge connecting both vertices, $t_1 = \pi/2$, $t_2 = \alpha/2$, and $t_3 = 3\pi/2$.
\end{construct}

\begin{proof}
Exponentiating each adjacency matrix yields
\begin{align*}
 & e^{-iA_1 t_1} \;=\; \ketbra{0}{0} - i\,\ketbra{1}{1}, \\
  &e^{-iA_2 t_2} \;=\; \cos\frac{\alpha}{2}(\ketbra{0}{0} + \ketbra{1}{1}) - i\sin\frac{\alpha}{2}(\ketbra{0}{1} + \ketbra{1}{0}) \\
  & e^{-iA_3 t_3} \;=\; \ketbra{0}{0} + i\,\ketbra{1}{1}.
\end{align*}

Multiplying them together yields the result.
\end{proof}

Next, we formalize the dynamic CTQW equivalent to the $R_z(\alpha)$ gate.

\begin{construct}
The $R_z(\alpha)$ is equivalent to the dynamic graph sequence $\{(A_1, t_1), (A_2,t_2)\}$ where $A_1$ is a two-vertex graph with a self-loop on $\ket{0}$, $A_2$ is a two-vertex graph with a self-loop on $\ket{1}$, $t_1 = \alpha/2$, and $t_2 = 2\pi- \alpha/2$.
\end{construct}
\begin{proof}
Exponentiating each adjacency matrix yields
\begin{align*}
& e^{-iA_1 t_1} \;=\; \bigl[\cos(\alpha/2) - i\sin(\alpha/2)\bigr]\,\ketbra{0}{0} + \ketbra{1}{1}, \\[2pt]
& e^{-iA_2 t_2} \;=\; \ketbra{0}{0} + e^{i\alpha/2}\,\ketbra{1}{1}.
\end{align*}
Multiplying them together yields the result.
\end{proof}
All rotation gate implementations are summarized in Table~\ref{tab:rotation_gate_summary}.

\begin{table}[ht]
\centering
\renewcommand{\arraystretch}{3.5}
\begin{tabular}{c|c|l}
\textbf{Gate} & \textbf{CTQW Graph Sequence} & \textbf{Source} \\ \hline
$R_x(\alpha)$ & $\Gedge{t=\frac{\alpha}{2}}$ &  Gonzales et al.~(2025) \cite{gonzales2025efficient} \\
$R_z(\alpha)$ & $\Gloopz{t=\frac{\alpha}{2}} \;\; \Gloop{t=2\pi{-}\frac{\alpha}{2}}$ & Derived \\
$R_y(\alpha)$ & $\Gloop{t=\frac{\pi}{2}} \;\; \Gedge{t=\frac{\alpha}{2}} \;\; \Gloop{t=\frac{3\pi}{2}}$ & Derived \\
$U_{(\theta,\phi,\lambda)}$ & $\Gloop{t=\frac{7\pi}{2}-\phi} \;\; \Gedge{t=\frac{\theta}{2}} \;\; \Gloop{t=\frac{5\pi}{2}-\lambda}$ & Adisa \& Wong (2021) \cite{adisa2021implementing} \\
\end{tabular}
\caption{Continuously parameterized rotation gate implementations. $R_x$ uses a single edge; $R_z$ requires two graphs (one self-loop on each vertex) for an exact, global-phase-free implementation; $R_y$ uses three graphs via the $S R_x S^\dagger$ decomposition.}
\label{tab:rotation_gate_summary}
\end{table}

\subsection{$n$-th roots of unitaries}\label{sec:othergates}

In \cite{adisa2021implementing}, Adisa and Wong showed that any single-qubit gate can be implemented as a length-3 dynamic CTQW by connecting the ZYZ Euler decomposition of $SU(2)$ to the two-vertex graph primitives. In this section we use their construction to derive the principal $n$-th root $U^{1/n}$ of an arbitrary single-qubit gate via De Moivre's theorem. The ZYZ decomposition is the standard Euler angle parametrization of $SU(2)$ (see, e.g., \cite{nielsen2011quantumCompAndQuantInfo}). Any single-qubit unitary can be written as (up to global phase)
\begin{equation}
U_{(\theta,\phi,\lambda)} \;=\; R_z(\phi)\, R_y(\theta)\, R_z(\lambda) \;=\; \begin{pmatrix}
\cos(\theta/2) & -e^{i\lambda}\sin(\theta/2) \\
e^{i\phi}\sin(\theta/2) & e^{i(\phi+\lambda)}\cos(\theta/2)
\end{pmatrix}
\label{eq:ZYZ}
\end{equation}

\begin{thm}[Length-3 dynamic walk, \cite{adisa2021implementing}]\label{thm:length3}
Any single-qubit gate $U_{(\theta,\phi,\lambda)}$ can be implemented up to global phase by a length-3 dynamic CTQW
\begin{equation}
U_{(\theta,\phi,\lambda)} \;=\;\;
\Gloop{\frac{7\pi}{2} - \phi} \;\;
\Gedge{\frac{\theta}{2}} \;\;
\Gloop{\frac{5\pi}{2} - \lambda}
\label{eq:gate_U}
\end{equation}
where all times are taken mod $2\pi$.
\end{thm}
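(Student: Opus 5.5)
The plan is a direct computation. I will exponentiate each of the three two-vertex graphs, identify each factor with a gate already constructed in Sec.~\ref{sec:phasegates} and Sec.~\ref{sec:rotationgates}, and then reassemble the factors into the ZYZ form \eqref{eq:ZYZ}, keeping track only of a global phase. The first thing to fix is how the displayed sequence \eqref{eq:gate_U} is read. I will read it as a matrix product written left to right, so the rightmost graph (time $\tfrac{5\pi}{2}-\lambda$) acts first on the state. This matches Eq.~\eqref{eq:ZYZ}, where $R_z(\lambda)$ is applied first. I will state this convention explicitly, because with the opposite reading the middle factor comes out as $R_y(-\theta)$ and the roles of $\phi$ and $\lambda$ swap.

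\textbf{Step 1: reduce times mod $2\pi$.} A self-loop on vertex $1$ run for time $t$ gives $\ketbra{0}{0}+e^{-it}\ketbra{1}{1}$, which is $2\pi$-periodic in $t$. The single edge gives $\cos t\, I - i\sin t\, X$, which is also $2\pi$-periodic. So every time may be taken mod $2\pi$ without changing the unitary.

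\textbf{Step 2: identify each factor.} Since $e^{-i7\pi/2}=i$ and $e^{-i5\pi/2}=-i$:
\begin{itemize}
\item the left loop graph gives $\mathrm{diag}(1,\,i e^{i\phi}) = P(\phi)\,S$;
\item the middle edge graph at time $\theta/2$ gives $R_x(\theta)$, as in the table of rotation gates;
\item the right loop graph gives $\mathrm{diag}(1,\,-i e^{i\lambda}) = S^\dagger P(\lambda)$.
\end{itemize}
Each of these uses only the phase-gate constructions above and the fact that diagonal matrices commute.

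\textbf{Step 3: assemble.} The product is
\[
P(\phi)\, S R_x(\theta) S^\dagger\, P(\lambda).
\]
By the $R_y$ construction, $S R_x(\theta) S^\dagger = R_y(\theta)$; equivalently, $SXS^\dagger = Y$ applied to the exponent. This leaves $P(\phi) R_y(\theta) P(\lambda)$. Finally, using $P(\alpha)=e^{i\alpha/2}R_z(\alpha)$, the product equals
\[
e^{i(\phi+\lambda)/2}\, R_z(\phi) R_y(\theta) R_z(\lambda) = e^{i(\phi+\lambda)/2}\, U_{(\theta,\phi,\lambda)},
\]
which is the claim up to global phase. As a sanity check, I will multiply the three explicit $2\times 2$ matrices and compare entrywise with \eqref{eq:ZYZ}. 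For example, the $(2,1)$ entry is $e^{i\phi}\sin(\theta/2)$ and the $(1,1)$ entry is $\cos(\theta/2)$, so the global phase here is actually $1$ relative to the matrix written in \eqref{eq:ZYZ}.

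The only real obstacle is the ordering and sign bookkeeping: deciding the product convention, and correctly evaluating $e^{-i7\pi/2}$ and $e^{-i5\pi/2}$. Everything else follows from constructions already established in the paper.
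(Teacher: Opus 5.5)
Your computation is correct. Read as a matrix product, the three factors are $\mathrm{diag}(1,ie^{i\phi})=P(\phi)S$, $R_x(\theta)$, and $\mathrm{diag}(1,-ie^{i\lambda})=S^\dagger P(\lambda)$. Their product $P(\phi)\,SR_x(\theta)S^\dagger\,P(\lambda)=P(\phi)R_y(\theta)P(\lambda)$ is entrywise the explicit matrix in \eqref{eq:ZYZ}, i.e.\ $e^{i(\phi+\lambda)/2}R_z(\phi)R_y(\theta)R_z(\lambda)$. The paper gives no proof of this statement; it only cites Adisa and Wong. So your argument is not an alternative route but a self-contained verification built from the paper's own phase-gate and $R_y$ constructions. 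Your remark is also right that the ``$=$'' in \eqref{eq:ZYZ} holds only up to $e^{i(\phi+\lambda)/2}$, with phase exactly $1$ relative to the explicit matrix.

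The one thing to sharpen is your ordering convention, because it carries real weight. Elsewhere the paper reads a sequence in time order, with the leftmost graph acting first:
\begin{itemize}
\item in the $R_y$ construction, the loop at $t_1=\pi/2$ (i.e.\ $S^\dagger$) acts first, so the product is $SR_x(\alpha)S^\dagger$;
\item the proofs of Observations~\ref{obs:looppassing} and \ref{obs:loopedgeshift} place the first graph of the sequence rightmost in the product;
\item the $\sqrt H$ example says ``leftmost-first means rightmost in matrix order''.
\end{itemize}
Under that reading, the displayed sequence gives $P(\lambda)R_y(-\theta)P(\phi)$. This is not $U_{(\theta,\phi,\lambda)}$ even up to global phase; take $\phi=\lambda=0$ and $\theta\notin\pi\mathbb{Z}$. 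Consistently with this, the $R_y(\alpha)$ row and the $U_{(\alpha,0,0)}$ row of the rotation-gate table are time-reverses of each other after reduction mod $2\pi$, so they cannot both be correct under one convention. You should therefore state explicitly that the theorem holds exactly when the $\lambda$-loop is applied first in time (equivalently, when the display is read as a matrix product). You should also note that this is the opposite of the convention the paper uses for its other sequences.
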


Given this construction, it is natural to consider how the CTQW for the principal $n$-th root $U^{1/n}$ relates to the CTQW for $U$. The answer is not the naive parameter scaling $(\theta/n, \phi/n, \lambda/n)$, but rather the correct parameters can be derived from De Moivre's theorem applied to the spectral decomposition of $U$.

Any single-qubit unitary $U \in U(2)$, which is the group of $2 \times 2$ complex matrices $U$ satisfying $U U^\dag = I$, can be written in axis-angle form as
\begin{equation}
U \;=\; e^{i\alpha}\,e^{-i(\hat{a}\cdot\vec\sigma)\frac{\beta}{2}}
\label{eq:axisangle}
\end{equation}
where $\alpha \in [0, 2\pi)$ is the global phase, $\beta \in [0, 2\pi)$ is the rotation angle, $\hat{a} \in S^2$ is the rotation axis (a unit Bloch vector), and $\vec\sigma = (X, Y, Z)$. The principal $n$-th root acts on the same eigenspaces, sending each eigenvalue to its principal $n$-th root,
\begin{equation}
U^{1/n} \;=\; e^{i\frac{\alpha}{n}}\,e^{-i(\hat{a}\cdot\vec\sigma)\frac{\beta}{2n}}.
\label{eq:axisanglenth}
\end{equation}
The axis $\hat{a}$ is preserved, the rotation angle is scaled to $\beta/n$, and the global phase is scaled to $\alpha/n$.

\begin{construct}
    $U^{1/n}$ is equivalent to the dynamic graph sequence $\{(A_1, t_1), (A_2, t_2), (A_3, t_3)\}$ with the same $A_i$ and $t_i$ as in Theorem~\ref{thm:length3} and
    \begin{align*}
\theta_n \;&=\; 2\arccos\sqrt{\cos^2\!\bigl(\tfrac{\beta}{2n}\bigr) + a_z^2\sin^2\!\bigl(\tfrac{\beta}{2n}\bigr)}, \\
\phi_n \;&=\; \arctan\!\bigl(a_z\tan\!\bigl(\tfrac{\beta}{2n}\bigr)\bigr) + \arctan\!\bigl(-\tfrac{a_x}{a_y}\bigr), \\
\lambda_n \;&=\; \arctan\!\bigl(a_z\tan\!\bigl(\tfrac{\beta}{2n}\bigr)\bigr) - \arctan\!\bigl(-\tfrac{a_x}{a_y}\bigr).
\end{align*}
\end{construct}

\begin{proof}
Given $U_{(\theta,\phi,\lambda)}$, we build $U^{1/n}$ in two steps. First, extract axis-angle $(\hat{a}, \beta, \alpha)$ from $(\theta, \phi, \lambda)$ via the standard $SU(2)$ formulas, after factoring out the global phase $\alpha = (\phi+\lambda)/2$
\begin{align}
\cos\!\bigl(\tfrac{\beta}{2}\bigr) \;&=\; \cos\!\bigl(\tfrac{\theta}{2}\bigr)\cos\!\bigl(\tfrac{\phi+\lambda}{2}\bigr), \nonumber \\
a_x\sin\!\bigl(\tfrac{\beta}{2}\bigr) \;&=\; -\sin\!\bigl(\tfrac{\theta}{2}\bigr)\sin\!\bigl(\tfrac{\phi-\lambda}{2}\bigr), \label{eq:zyz2axis} \\
a_y\sin\!\bigl(\tfrac{\beta}{2}\bigr) \;&=\; \sin\!\bigl(\tfrac{\theta}{2}\bigr)\cos\!\bigl(\tfrac{\phi-\lambda}{2}\bigr), \nonumber \\
a_z\sin\!\bigl(\tfrac{\beta}{2}\bigr) \;&=\; \cos\!\bigl(\tfrac{\theta}{2}\bigr)\sin\!\bigl(\tfrac{\phi+\lambda}{2}\bigr). \nonumber
\end{align}
Second, $U^{1/n}$ is realized by the same length-3 CTQW (Eq.~\eqref{eq:gate_U}) with ZYZ Euler angles $(\theta_n, \phi_n, \lambda_n)$ obtained by replacing $\beta$ with $\beta/n$ in the inverse map of Eq.~\eqref{eq:zyz2axis} while keeping the axis $\hat{a}$ fixed:
\begin{align}
\theta_n \;&=\; 2\arccos\sqrt{\cos^2\!\bigl(\tfrac{\beta}{2n}\bigr) + a_z^2\sin^2\!\bigl(\tfrac{\beta}{2n}\bigr)}, \nonumber \\
\phi_n \;&=\; \arctan\!\bigl(a_z\tan\!\bigl(\tfrac{\beta}{2n}\bigr)\bigr) + \arctan\!\bigl(-\tfrac{a_x}{a_y}\bigr), \label{eq:nthroot_angles} \\
\lambda_n \;&=\; \arctan\!\bigl(a_z\tan\!\bigl(\tfrac{\beta}{2n}\bigr)\bigr) - \arctan\!\bigl(-\tfrac{a_x}{a_y}\bigr). \nonumber
\end{align}
Substituting into Eq.~\eqref{eq:gate_U},
\begin{equation}
U^{1/n} \;=\; \Gloop{\frac{7\pi}{2} - \phi_n}\,\Gedge{\frac{\theta_n}{2}}\,\Gloop{\frac{5\pi}{2} - \lambda_n}.
\label{eq:gate_Unth}
\end{equation}
Only $\beta/n$ inside the $\arctan$ and $\arccos$ depends on $n$, while the axis-fixed term $\arctan(-a_x/a_y)$ in $\phi_n$ and $\lambda_n$ is independent of $n$. The global phase $e^{i\alpha/n}$ is absorbed by Eq.~\eqref{eq:gate_U}, which already realizes $U$ up to global phase. At $n = 1$, $(\theta_1, \phi_1, \lambda_1)$ reduce to the input ZYZ angles for $U$ itself.
\end{proof}

When $\hat{a}$ aligns with a coordinate axis, that is when $U$ is a Pauli-axis gate up to global phase, the recipe trivializes to direct angle scaling. For any involution $A$ (satisfying $A^2 = I$), the propagator $e^{-iAs}$ has spectrum $\{e^{-is}, e^{is}\}$ on the $\pm 1$ eigenspaces of $A$, and De Moivre directly gives
\[
\bigl(e^{-iAs}\bigr)^{1/n} \;=\; e^{-iA\frac{s}{n}} \;=\; \cos\!\bigl(\tfrac{s}{n}\bigr)\,I - i\sin\!\bigl(\tfrac{s}{n}\bigr)\,A.
\]
So taking the $n$-th root of an involution-exponential simply divides the time by $n$. This recovers the unified family of $n$-th roots of Pauli-axis gates collected in Table~\ref{tab:nth_roots}, where the $Z^{1/n}$ row reproduces the standard diagonal phase gates ($S = Z^{1/2}$, $T = Z^{1/4}$, etc.) from the above single-loop realizations.

\begin{table}[h]
\centering
\renewcommand{\arraystretch}{1.5}
\begin{tabular}{c|c}
\textbf{Gate} & \textbf{$n$-th root CTQW} \\ \hline
$X^{1/n}$ & $\Gboth{2\pi - \frac{\pi}{2n}}\,\Gedge{\frac{\pi}{2n}}$ \\
$Z^{1/n}$ & $\Gboth{2\pi - \frac{\pi}{2n}}\,\Gloop{\frac{\pi}{n}}$ \\
$Y^{1/n}$ & $\Gloop{\frac{\pi}{2}}\,\Gedge{\frac{\pi}{2n}}\,\Gloop{\frac{3\pi}{2}}$ \\
\end{tabular}
\caption{$n$-th roots of Pauli-axis gates as dynamic graph sequences.}
\label{tab:nth_roots}
\end{table}

\begin{example}
As a worked example, consider $\sqrt{H}$. The Hadamard $H = \frac{1}{\sqrt 2}(X + Z)$ is a rotation by $\pi$ about the axis $\hat{a} = (1/\sqrt 2, 0, 1/\sqrt 2)$, with global phase $\alpha = \pi/2$, so $H = i\,e^{-i(\hat{a}\cdot\vec\sigma)\frac{\pi}{2}}$. Substituting into the axis-angle form (Eq.~\eqref{eq:axisanglenth}) at $n = 2$,
\begin{align*}
\sqrt{H} \;&=\; e^{i\frac{\pi}{4}}\,e^{-i(\hat{a}\cdot\vec\sigma)\frac{\pi}{4}} \\
&=\; e^{i\frac{\pi}{4}}\,\bigl(\cos\!\bigl(\tfrac{\pi}{4}\bigr)\,I - i\sin\!\bigl(\tfrac{\pi}{4}\bigr)\,\hat{a}\cdot\vec\sigma\bigr) \\
&=\; \frac{1+i}{2}\,I + \frac{1-i}{2\sqrt 2}\,(X + Z),
\end{align*}
and one can verify by Pauli arithmetic that $(\sqrt H)^2 = H$. To realize $\sqrt H$ as a length-3 CTQW, run the recipe from Eq.~\eqref{eq:zyz2axis}. For $H = R_z(0)R_y(\pi/2)R_z(\pi)$ (up to global phase), $(\theta, \phi, \lambda) = (\pi/2, 0, \pi)$, so $\alpha = \pi/2$, $\cos(\beta/2) = 0$ giving $\beta = \pi$, and the axis components are $a_x = 1/\sqrt 2$, $a_y = 0$, $a_z = 1/\sqrt 2$. At $n = 2$, the new ZYZ angles are $\theta_2 = \pi/3$, $\phi_2 = \arctan(1/\sqrt 2) - \pi/2$, and $\lambda_2 = \arctan(1/\sqrt 2) + \pi/2$. Substituting into Eq.~\eqref{eq:gate_Unth} and reducing the boundary-loop times modulo $2\pi$,
\begin{equation}
\sqrt H \;=\; \Gloop{2\pi - \eta} \;\; \Gedge{\pi/6} \;\; \Gloop{2\pi - \eta},
\label{eq:sqrtH_chain}
\end{equation}
where $\eta := \arctan(1/\sqrt{2})$. Both boundary loops have the same time after period reduction (length-3 form up to a global phase, as in Theorem~\ref{thm:length3}). 

The time-ordered $\sqrt H$ propagator (leftmost-first means rightmost in matrix order) is
\begin{align}
U_{\sqrt H} \;=\; G_L\,G_E\,G_L
\;&=\; \begin{pmatrix} 1 & 0 \\ 0 & e^{i\arctan(1/\sqrt{2})} \end{pmatrix}\begin{pmatrix} \tfrac{\sqrt 3}{2} & -\tfrac{i}{2} \\ -\tfrac{i}{2} & \tfrac{\sqrt 3}{2} \end{pmatrix}\begin{pmatrix} 1 & 0 \\ 0 & e^{i\arctan(1/\sqrt{2})} \end{pmatrix} \\[2pt]
\;&=\; \begin{pmatrix} \tfrac{\sqrt 3}{2} & -\tfrac{i}{2}\,e^{i\eta} \\ -\tfrac{i}{2}\,e^{i\eta} & \tfrac{\sqrt 3}{2}\,e^{i 2\eta} \end{pmatrix}, \label{eq:sqrtH}
\end{align}
where we have set $\eta := \arctan(1/\sqrt 2)$, $\cos\eta = \sqrt{2/3}$, $\sin\eta = 1/\sqrt 3$, so $e^{i\eta} = \tfrac{\sqrt 6 + i \sqrt 3}{3}$ and $e^{i 2\eta} = \tfrac{1 + 2\sqrt 2\,i}{3}$. Substituting these values entry by entry,
\begin{equation*}
U_{\sqrt H} \;=\; \frac{1}{6}\begin{pmatrix} 3\sqrt 3 & \sqrt 3 - i\sqrt 6 \\ \sqrt 3 - i\sqrt 6 & \sqrt 3 + i\,2\sqrt 6 \end{pmatrix}.
\end{equation*}
This is the closed-form $\sqrt H$ matrix multiplied by the residual global phase $e^{i(\eta - \pi/4)}$ of the length-$3$ Adisa--Wong construction. Directly,
\begin{equation*}
e^{i(\eta - \pi/4)}\,\sqrt H \;=\; e^{i(\eta - \pi/4)} \biggl(\frac{1+i}{2}\,I + \frac{1-i}{2\sqrt 2}(X + Z)\biggr) \;=\; \frac{1}{6}\begin{pmatrix} 3\sqrt 3 & \sqrt 3 - i\sqrt 6 \\ \sqrt 3 - i\sqrt 6 & \sqrt 3 + i\,2\sqrt 6 \end{pmatrix},
\end{equation*}
which matches $U_{\sqrt H}$ above, using $e^{i(\eta - \pi/4)} = e^{i\eta}\cdot e^{-i\pi/4} = \tfrac{\sqrt 6 + i\sqrt 3}{3}\cdot\tfrac{1 - i}{\sqrt 2}$. Equivalently, $U_{\sqrt H} = e^{i(\eta - \pi/4)}\,\sqrt H$, where $\sqrt H$ itself is the gate $\tfrac{1+i}{2}\,I + \tfrac{1-i}{2\sqrt 2}(X+Z)$ derived in the worked example above.

\textit{Squaring to verify $(\sqrt H)^2 = H$.} Squaring Eq.~\eqref{eq:sqrtH} yields
\begin{align*}
U_{\sqrt H}^{\,2}
\;&=\; \begin{pmatrix} \tfrac{\sqrt 3}{2} & -\tfrac{i}{2}\,e^{i\eta} \\ -\tfrac{i}{2}\,e^{i\eta} & \tfrac{\sqrt 3}{2}\,e^{i 2\eta} \end{pmatrix}^{\!2}
\\[2pt]
\;&=\; \begin{pmatrix} \tfrac{3}{4} - \tfrac{1}{4}\,e^{i 2\eta} & -\tfrac{i\sqrt 3}{4}\bigl(1 + e^{i 2\eta}\bigr)\,e^{i\eta} \\ -\tfrac{i\sqrt 3}{4}\bigl(1 + e^{i 2\eta}\bigr)\,e^{i\eta} & \tfrac{3}{4}\,e^{i 4\eta} - \tfrac{1}{4}\,e^{i 2\eta} \end{pmatrix}.
\end{align*}
Using $e^{i 2\eta} = \tfrac{1}{3} + \tfrac{2\sqrt 2}{3}\,i$ and the identity $1 + e^{i 2\eta} = 2\cos\eta\,e^{i\eta}$ (since $|1 + e^{i 2\eta}| = 2\cos\eta = 2\sqrt{2/3}$ and $\arg(1+e^{i 2\eta}) = \eta$), the off-diagonal entries simplify to
\begin{equation*}
-\tfrac{i\sqrt 3}{4}\cdot 2\cos\eta\,e^{i\eta} \cdot e^{i\eta} \;=\; -\tfrac{i\sqrt 3\cos\eta}{2}\,e^{i 2\eta} \;=\; -\tfrac{i\sqrt 2}{2}\,e^{i 2\eta},
\end{equation*}
where the last step uses $\sqrt 3\cos\eta = \sqrt 3 \cdot \sqrt{2/3} = \sqrt 2$. For the diagonal entries, the closed forms of $e^{i 2\eta}$ and $e^{i 4\eta} = \cos 4\eta + i\sin 4\eta = -\tfrac{7}{9} + \tfrac{4\sqrt 2}{9}\,i$ give
\begin{equation*}
\tfrac{3}{4} - \tfrac{1}{4}\,e^{i 2\eta} \;=\; \tfrac{2}{3} - \tfrac{i\sqrt 2}{6}, \qquad \tfrac{3}{4}\,e^{i 4\eta} - \tfrac{1}{4}\,e^{i 2\eta} \;=\; -\tfrac{2}{3} + \tfrac{i\sqrt 2}{6}.
\end{equation*}
Factoring out $e^{i(2\eta - \pi/2)} = \sin 2\eta - i\cos 2\eta = \tfrac{2\sqrt 2 - i}{3}$ from every entry, we obtain
\begin{equation*}
U_{\sqrt H}^{\,2} \;=\; \frac{2\sqrt 2 - i}{3}\,\begin{pmatrix} \tfrac{1}{\sqrt 2} & \tfrac{1}{\sqrt 2} \\ \tfrac{1}{\sqrt 2} & -\tfrac{1}{\sqrt 2} \end{pmatrix} \;=\; e^{i(2\eta - \pi/2)}\,H.
\end{equation*}
The residual global phase $e^{i(2\eta - \pi/2)} = \bigl(e^{i(\eta - \pi/4)}\bigr)^2$ is exactly the square of the single-copy phase identified above, consistent with $U_{\sqrt H}^{\,2} = \bigl(e^{i(\eta - \pi/4)}\,\sqrt H\bigr)^2 = e^{i(2\eta - \pi/2)}\,(\sqrt H)^2 = e^{i(2\eta - \pi/2)}\,H$, which verifies $(\sqrt H)^2 = H$ at the CTQW level (with the global phase tracked exactly).
\end{example}
\begin{figure}
\centering
\begin{tikzpicture}[scale=0.8, every node/.style={transform shape}, thick, 
  main node/.style={draw, circle, inner sep=0.02cm, minimum size=0.4cm, font=\sffamily\scriptsize}, 
  label node/.style={inner sep=0cm}]

\begin{scope}[xshift=0cm]
  \draw (0,0) node (1) [main node] {0};
  \draw (1,0) node (2) [main node] {1};
  
  \draw (.5,-1) node [label node] {$t = 3\pi/2$};
  \draw (.5,.75) node [label node] {$G_1$};
  
  \path (2) edge [in=30, out=60, loop, above right, inner sep=0.02cm] (2);
   \path (1) edge [in=30, out=60, loop, above right, inner sep=0.02cm] (1);
\end{scope}

\begin{scope}[xshift=2.2cm]
  \draw (0,0) node (1) [main node] {0};
  \draw (1,0) node (2) [main node] {1};
  
  \draw (.5,-1) node [label node] {$t = \pi/2$};
  \draw (.5,.75) node [label node] {$G_2$};
  
  \draw (1) -- (2);
\end{scope}

\end{tikzpicture}
\caption{Original $X$ gate}\label{fig:Xgate}
\end{figure}

\section{Graph simplification rules}\label{sec:usimplifications} 

As motivation, recall that $HZH=X$. Note that Fig.~\ref{fig:HZH} simplifies to Fig.~\ref{fig:newX} by the simplification rules from \cite{herrman2022simplifying}. We will show that Fig.~\ref{fig:newX} is equivalent to the $X$ gate. First, the adjacency matrices for $G_1$ and $G_2$ in Fig.~\ref{fig:HZH} respectively are $A_1$ and $A_2$ (note $G_1 = G_3$ so $A_1 = A_3$).
\begin{equation*}
  A_1 = 
  \begin{pmatrix}
    0 & 0 \\
    0 & 1
  \end{pmatrix} \; \; \; \; \;\;\;\;\;\;\;
  A_2 = 
  \begin{pmatrix}
    0 & 1 \\
    1 & 0 
  \end{pmatrix}.
\end{equation*}
Note that 
\begin{align*}
e^{-i A_1 3\pi/2}&e^{-i A_2 \pi/2}e^{-i A_1 3\pi/2}= \\
 & \begin{pmatrix}
    1 & 0 \\
    0 & \cos(3\pi/2) -i \sin(3\pi/2)
  \end{pmatrix} 
  \begin{pmatrix}
    0 & \cos(\pi/2) - i \sin(\pi/2) \\
    \cos(\pi/2) - i \sin(\pi/2) & 0 
  \end{pmatrix}
    \begin{pmatrix}
    1 & 0 \\
    0 & \cos(3\pi/2) -i \sin(3\pi/2)
  \end{pmatrix} = \\
&    \begin{pmatrix}
    1 & 0 \\
    0 & i
  \end{pmatrix} 
  \begin{pmatrix}
    0 & -i \\
    -i & 0 
  \end{pmatrix}
    \begin{pmatrix}
    1 & 0 \\
    0 & i
  \end{pmatrix} = \\
 &  \begin{pmatrix}
    0 & 1 \\
    1 & 0
  \end{pmatrix} ,
\end{align*}
which is the X-gate. This observation inspires our first dynamic graph simplification rule.
\begin{obs}\label{obs:movingselfloop}
Let $A_k$ be the adjacency matrix of a graph that has a single self loop on vertex $k$, $A_j$ be the adjacency matrix of a graph with a single self-loop on vertex $j$, and $A_{jk}$ be the adjacency matrix of a graph that has an edge connecting vertices $j$ and $k$. If $t_{jk} = \frac{(2m+1) \pi}{2}$ and $t_j = t_k$, then $e^{-i A_k t_k}e^{-i A_{jk} t_{jk}}=e^{-i A_{jk} t_{jk}}e^{-i A_j t_j}$.
\end{obs}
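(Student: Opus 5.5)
Restricting to the two-dimensional subspace spanned by $\ket{j}$ and $\ket{k}$ reduces the claim to a $2\times 2$ matrix identity. All three adjacency matrices $A_j$, $A_k$ and $A_{jk}$ vanish outside this subspace, so each propagator acts as the identity on its orthogonal complement and both sides agree there trivially. Ordering the basis as $(\ket{j},\ket{k})$, we have $A_j=\ketbra{j}{j}$, $A_k=\ketbra{k}{k}$ and $A_{jk}=\ketbra{j}{k}+\ketbra{k}{j}$.

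Next I would write each propagator in closed form, as in the proofs of the earlier constructions. Since $A_j$ and $A_k$ are projectors,
\[
e^{-iA_j t}=\ketbra{k}{k}+e^{-it}\ketbra{j}{j}
\qquad\text{and}\qquad
e^{-iA_k t}=\ketbra{j}{j}+e^{-it}\ketbra{k}{k}.
\]
Since $A_{jk}^2$ is the identity on the subspace,
\[
e^{-iA_{jk}s}=\cos s\,(\ketbra{j}{j}+\ketbra{k}{k})-i\sin s\,A_{jk}.
\]
At $s=\frac{(2m+1)\pi}{2}$ the cosine vanishes and $\sin s=(-1)^m$. Hence $e^{-iA_{jk}s}=-i(-1)^m A_{jk}$, which is a scalar multiple of the swap of $\ket{j}$ and $\ket{k}$.

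The key step is the conjugation relation $A_{jk}A_jA_{jk}=A_k$. Because $A_{jk}$ is a self-inverse permutation, this gives $A_{jk}\,e^{-iA_j t}\,A_{jk}=e^{-iA_k t}$, and therefore $e^{-iA_k t}A_{jk}=A_{jk}e^{-iA_j t}$. Multiplying through by the scalar $-i(-1)^m$ and setting $t=t_j=t_k$ yields the stated identity. As a sanity check, I would verify this directly on basis vectors. Both sides send $\ket{j}$ to $-i(-1)^m e^{-it}\ket{k}$, and both send $\ket{k}$ to $-i(-1)^m\ket{j}$.

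I expect no real obstacle. The only subtle point is to use the hypothesis $t_{jk}=(2m+1)\pi/2$ in exactly one place: it kills the $\cos$ term, so that the edge propagator is a pure swap up to phase. For other times a leftover identity component would not intertwine the two loops, and the identity would fail. I would also remark that the hypothesis $t_j=t_k$ is what lets the loop phase be transported unchanged across the swap.
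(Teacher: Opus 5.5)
Your argument is correct, and it differs from the paper's mainly in how the key step is justified. The paper writes out both propagators on the full space and multiplies the two products entry by entry. You instead observe two facts. First, at $t_{jk}=(2m+1)\pi/2$ the edge propagator is $-i(-1)^m$ times the swap of $\ket{j}$ and $\ket{k}$. Second, this swap conjugates $A_j$ into $A_k$, so it intertwines $e^{-iA_j t}$ with $e^{-iA_k t}$. The entrywise computation then serves only as a sanity check.

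Your route explains why each hypothesis is needed. The vanishing cosine removes the identity component, which does not intertwine the two loops, and $t_j=t_k$ lets the same phase be carried across the swap. It also generalizes at once: any function of $A_j$ passes through the swap as the same function of $A_k$. Observation~\ref{obs:looppassing} is just the two-vertex case $j=1$, $k=0$.

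Your computation is also more accurate than the paper's displayed one, which has several slips:
\begin{itemize}
\item it writes the edge propagator with $-1$ rather than $0$ on the $j,k$ diagonal;
\item it gives the off-diagonal coefficient as $(-1)^m i$ rather than $-i(-1)^m$;
\item it omits a diagonal projector from each loop propagator in the evaluated products;
\item it describes the products as diagonal matrices with entries $i$ and $ie^{-it_k}$ at positions $j$ and $k$.
\end{itemize}
In fact both products equal $-i(-1)^m\bigl(\ketbra{j}{k}+e^{-it}\ketbra{k}{j}\bigr)$ on the $\{j,k\}$ block and the identity elsewhere, exactly as your basis-vector check shows. The paper's conclusion stands, but your computation is the accurate one.
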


\begin{proof}
    Note that
    \begin{equation*}
        \exp{-i A_j t_j} = (\cos(t_j) - i \sin(t_j))\ketbra{j}{j} + \sum_{\ell \neq j} \ketbra{\ell}{\ell}
    \end{equation*}
    and
        \begin{equation*}
        \exp{-i A_{jk} t_{jk}} = (\cos(t_{jk}))(\ketbra{j}{j} + \ketbra{k}{k}) -i \sin(t_{jk})(\ketbra{k}{j}+\ketbra{j}{k}) + \sum_{\ell \neq j,k} \ketbra{\ell}{\ell}.
    \end{equation*}

  When $t_{jk} = \frac{(2m+1)\pi}{2}$ and $t_j = t_k$,
    \begin{align*}
        e^{-i A_k t_k}e^{-i A_{jk} t_{jk}} & = [(\cos(t_k) - i \sin(t_k))\ketbra{k}{k} + \sum_{\ell \neq j,k} \ketbra{\ell}{\ell}][ -1(\ketbra{j}{j} + \ketbra{k}{k}) + (-1)^m i(\ketbra{k}{j}+\ketbra{j}{k})  + \sum_{\ell \neq j, k} \ketbra{\ell}{\ell}] \\
    \end{align*}

        The expression in the right bracket is a matrix that has ``1" on all diagonal entries except at positions $j$ and $k$, which have ``0", and entries $jk$ and $kj$ have $(-1)^m i$. The expression in the left bracket is a diagonal matrix with all diagonal entries ``1" except $k$, which has entry ``$e^{-i t_k}$". Multiplying these together yields the matrix with all ``1" entries on the diagonal except position $j$ has a ``$i$" and position $k$ has a ``$ie^{-i t_k}$".

    Furthermore,
    \begin{align*}
    e^{-i A_{jk} t_{jk}}e^{-i A_j t_j} & = [\cos(t_{jk})(\ketbra{j}{j} + \ketbra{k}{k}) -i \sin(t_{jk})(\ketbra{k}{j}+\ketbra{j}{k})  + \sum_{\ell \neq j,k} \ketbra{\ell}{\ell}][(\cos(t_j) - i \sin(t_j))A_j + \sum_{\ell \neq j} \ketbra{\ell}{\ell}] \\
    &= [-1(\ketbra{j}{j} + \ketbra{k}{k}) + (-1)^m i (\ketbra{k}{j}+\ketbra{j}{k})  + \sum_{\ell \neq j,k} \ketbra{\ell}{\ell}][(\cos(t_k) - i \sin(t_k))\ketbra{j}{j} + \sum_{\ell \neq j,k} \ketbra{\ell}{\ell}].
    \end{align*}
    The expression in the left bracket here is the same as the right bracket above. The expression in the right bracket is a diagonal matrix with all diagonal entries ``1" except $j$, which has entry ``$e^{-i t_k}$". Multiplying these together yields the matrix with all ``1" entries on the diagonal except position $j$ has a ``$i$" and position $k$ has a ``$ie^{-i t_k}$".
\end{proof}

\begin{figure}
\centering
\begin{tikzpicture}[scale=0.8, every node/.style={transform shape}, thick, 
  main node/.style={draw, circle, inner sep=0.02cm, minimum size=0.4cm, font=\sffamily\scriptsize}, 
  label node/.style={inner sep=0cm}]

\begin{scope}[xshift=0cm]
  \draw (0,0) node (1) [main node] {0};
  \draw (1,0) node (2) [main node] {1};
  
  \draw (.5,-1) node [label node] {$t = 3\pi/2$};
  \draw (.5,.75) node [label node] {$G_1$};
  
  \path (2) edge [in=30, out=60, loop, above right, inner sep=0.02cm] (2);
\end{scope}

\begin{scope}[xshift=2.2cm]
  \draw (0,0) node (1) [main node] {0};
  \draw (1,0) node (2) [main node] {1};
  
  \draw (.5,-1) node [label node] {$t = \pi/4$};
  \draw (.5,.75) node [label node] {$G_2$};
  
  \draw (1) -- (2);
\end{scope}

\begin{scope}[xshift=4.4cm]
  \draw (0,0) node (1) [main node] {0};
  \draw (1,0) node (2) [main node] {1};

  \draw (.5,-1) node [label node] {$t = 3\pi/2$};
  \draw (.5,.75) node [label node] {$G_3$};
  
  \path (2) edge [in=30, out=60, loop, above right, inner sep=0.02cm] (2);
\end{scope}

\begin{scope}[xshift=6.6cm]
  \draw (0,0) node (1) [main node] {0};
  \draw (1,0) node (2) [main node] {1};

  \draw (.5,-1) node [label node] {$t = \pi$};
  \draw (.5,.75) node [label node] {$G_4$};
  
    \path (2) edge [in=30, out=60, loop, above right, inner sep=0.02cm] (2);
\end{scope}

\begin{scope}[xshift=8.8cm]
  \draw (0,0) node (1) [main node] {0};
  \draw (1,0) node (2) [main node] {1};
  
  \draw (.5,-1) node [label node] {$t = 3\pi/2$};
  \draw (.5,.75) node [label node] {$G_5$};
  
 \path (2) edge [in=30, out=60, loop, above right, inner sep=0.02cm] (2);
\end{scope}

\begin{scope}[xshift=11.1cm]
  \draw (0,0) node (1) [main node] {0};
  \draw (1,0) node (2) [main node] {1};
  
  \draw (.5,-1) node [label node] {$t = \pi/4$};
  \draw (.5,.75) node [label node] {$G_6$};
  
  \draw (1) -- (2);
\end{scope}

\begin{scope}[xshift=13.3cm]
  \draw (0,0) node (1) [main node] {0};
  \draw (1,0) node (2) [main node] {1};
  
  \draw (.5,-1) node [label node] {$t = 3\pi/2$};
  \draw (.5,.75) node [label node] {$G_7$};
  
  \path (2) edge [in=30, out=60, loop, above right, inner sep=0.02cm] (2);
\end{scope}

\end{tikzpicture}
\caption{$HZH$}\label{fig:HZH}
\end{figure}

\begin{figure}
\centering
\begin{tikzpicture}[scale=0.8, every node/.style={transform shape}, thick, 
  main node/.style={draw, circle, inner sep=0.02cm, minimum size=0.4cm, font=\sffamily\scriptsize}, 
  label node/.style={inner sep=0cm}]

\begin{scope}[xshift=0cm]
  \draw (0,0) node (1) [main node] {0};
  \draw (1,0) node (2) [main node] {1};
  
  \draw (.5,-1) node [label node] {$t = 3\pi/2$};
  \draw (.5,.75) node [label node] {$G_1$};
  
  \path (2) edge [in=30, out=60, loop, above right, inner sep=0.02cm] (2);
\end{scope}

\begin{scope}[xshift=2.2cm]
  \draw (0,0) node (1) [main node] {0};
  \draw (1,0) node (2) [main node] {1};
  
  \draw (.5,-1) node [label node] {$t = \pi/2$};
  \draw (.5,.75) node [label node] {$G_2$};
  
  \draw (1) -- (2);
\end{scope}

\begin{scope}[xshift=4.4cm]
  \draw (0,0) node (1) [main node] {0};
  \draw (1,0) node (2) [main node] {1};
  
  \draw (.5,-1) node [label node] {$t = 3\pi/2$};
  \draw (.5,.75) node [label node] {$G_3$};
  
  \path (2) edge [in=30, out=60, loop, above right, inner sep=0.02cm] (2);
\end{scope}

\end{tikzpicture}
\caption{Simplified $HZH$, equivalent to $X$ gate.}\label{fig:newX}
\end{figure}

It is also worth noting that particular sequence of graphs can be combined into a much simpler single graph if they satisfy certain commutativity and edge properties.
\begin{obs}\label{obs:repeatedgecommute}
Let $G$ be a graph, $A_G$ its adjacency matrix, $\{H_i\}_{i=1}^n$ be subgraphs of $G$ that have adjacency matrices $\{A_i\}_{i=1}^n$ that a) all pairwise commute, b) $||A_i|| = ||A_j||$ for all $i \neq j$, and c) that each edge $ij \in E(G)$ shows up in exactly $\ell$ of the subgraphs $H_i$. Then 
\begin{equation*}
    (\prod_{j=1}^n e^{-i A_j t/ ||A_j||})^k = e^{-i A_G t'/ ||A_G||}
\end{equation*}
where $t' = kt \ell (||A_G||/ ||A_1||)$.

\end{obs}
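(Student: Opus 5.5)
Because the $A_j$ pairwise commute (hypothesis a), the product of their exponentials collapses into a single exponential, and the problem reduces to two linear-algebra identities: one for the sum of the $A_j$, and one rescaling the exponent. Concretely, commuting exponentials satisfy $e^{X}e^{Y}=e^{X+Y}$, and the scalar prefactors $t/\|A_j\|$ do not affect commutativity. Applying this inductively over $j=1,\dots,n$ gives
\begin{equation*}
\prod_{j=1}^n e^{-iA_j t/\|A_j\|} \;=\; \exp\Bigl(-i\sum_{j=1}^n A_j\, t/\|A_j\|\Bigr).
\end{equation*}
By hypothesis (b) all the norms equal $\|A_1\|$, so the exponent is $-i\,(t/\|A_1\|)\sum_j A_j$. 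Raising to the $k$-th power multiplies the exponent by $k$, since a matrix commutes with itself.

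The second step is the combinatorial identity $\sum_{j=1}^n A_j = \ell A_G$. I will check it entrywise. The $(u,v)$ entry of $\sum_j A_j$ counts the subgraphs $H_j$ that contain the edge $uv$, with self-loops treated as edges $uu$. By hypothesis (c), this count is $\ell$ when $uv\in E(G)$. When $uv\notin E(G)$ the count is $0$, because each $H_j$ is a subgraph of $G$. I must state explicitly that (c) is read as covering self-loops as well.

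Substituting, the left-hand side equals $\exp\bigl(-i\,k t\ell A_G/\|A_1\|\bigr)$. Writing $A_G/\|A_1\| = (A_G/\|A_G\|)(\|A_G\|/\|A_1\|)$ rewrites this as $e^{-iA_G t'/\|A_G\|}$ with $t' = k t\ell\,\|A_G\|/\|A_1\|$, exactly as claimed. Nothing here depends on which matrix norm is used, provided the same norm appears throughout and is nonzero.

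The main obstacle is not analytic; it is making hypothesis (c) precise. The identity $\sum_j A_j=\ell A_G$ needs each edge, including every self-loop, to lie in exactly $\ell$ of the $H_j$, and no $H_j$ may contain anything outside $E(G)$. I would record this interpretation, note that it is automatic from ``subgraph'', and remark that without commutativity (a) the first step fails. In that case one only gets a Trotter approximation, which explains why (a) is essential.
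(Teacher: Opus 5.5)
Your proposal is correct and follows the same route as the paper: merge the commuting exponentials into one, use hypothesis (b) to factor out $t/\|A_1\|$, replace $\sum_j A_j$ by $\ell A_G$ via (c), and rescale by $\|A_G\|/\|A_1\|$ to get $t'$. Your entrywise check of $\sum_j A_j=\ell A_G$ uses that each $H_j$ is a subgraph of $G$ and reads (c) as covering self-loops; this makes explicit a step the paper asserts without comment.
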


\begin{proof}
    Since all $A_i$ pairwise commute and $||A_i|| = ||A_j||$ for all $i \neq j$,
    \begin{align*}
        (\prod_{j=1}^n e^{-i A_j t/||A_j||})^k &= e^{-i kt(A_1 + A_2 + \ldots + A_n)/ ||A_1||} \\
        & = e^{-i kt \ell A_G/ ||A_1||} = e^{-i kt \ell ||A_G|| A_G/ (||A_1|| ||A_G||)} = e^{-i kt \ell (||A_G||/ ||A_1||) A_G/ ||A_G||}.
    \end{align*}
    Setting $t' = kt \ell (||A_G||/ ||A_1||)$ gives the result.
\end{proof}

In particular, hypercubes and subcubes, as shown in Fig.~\ref{fig:subcubes}, satisfy this observation when $t' = 3t$. To see this, note that $||A_4|| = 3$ and $||A_1|| = ||A_2|| = ||A_3|| = 2$. It is easy to see from Fig.~\ref{fig:subcubes} that each edge in the right graph appears in exactly two of the three graphs in the left graphs. It is also easy to check that the adjacency matrices of the left three graphs all pairwise commute. Then,
\begin{align*}
         e^{-i A_1 t/2}e^{-i A_2 t/2} e^{-i A_3 t/2}&= e^{-i t(A_1 + A_2 + A_3)/2} \\
        & = e^{-i 2t A_G/ 2} = e^{-i 3t A_G/ 3} = e^{-i t' A_G/ ||A_G||}.
    \end{align*}

\begin{figure}
\centering
\begin{tikzpicture}[scale=0.8, every node/.style={transform shape}, thick, 
  main node/.style={draw, circle, inner sep=0.02cm, minimum size=0.4cm, font=\sffamily\scriptsize}, 
  label node/.style={inner sep=0cm}]

\begin{scope}[xshift=0cm]
  \draw (0,0) node (1) [main node] {000};
  \draw (1,0) node (2) [main node] {001};
  \draw (0,-2) node (3) [main node] {010};
  \draw (1,-2) node (4) [main node] {011};
   \draw (0,-4) node (5) [main node] {100};
  \draw (1,-4) node (6) [main node] {101};
  \draw (0,-6) node (7) [main node] {110};
  \draw (1,-6) node (8) [main node] {111};
  
  \draw (.5,-7) node [label node] {$t$};
  \draw (.5,.75) node [label node] {$G_1$};

    \draw (1) -- (2);
      \draw (3) -- (4);
        \draw (5) -- (6);
          \draw (7) -- (8);

    \draw (1) -- (3);
      \draw (2) -- (4);
        \draw (5) -- (7);
          \draw (6) -- (8);
\end{scope}

\begin{scope}
  \draw (3,0) node (1) [main node] {000};
  \draw (4,0) node (2) [main node] {001};
  \draw (3,-2) node (3) [main node] {010};
  \draw (4,-2) node (4) [main node] {011};
   \draw (3,-4) node (5) [main node] {100};
  \draw (4,-4) node (6) [main node] {101};
  \draw (3,-6) node (7) [main node] {110};
  \draw (4,-6) node (8) [main node] {111};
  
  \draw (3.5,-7) node [label node] {$t$};
  \draw (3.5,.75) node [label node] {$G_2$};

    \draw (1) -- (2);
      \draw (3) -- (4);
        \draw (5) -- (6);
          \draw (7) -- (8);

    \draw (1) to [in=110, out=250] (5);
      \draw (2) to [in=70, out=290] (6);
        \draw (3) to [in=110, out=250] (7);
          \draw (4) to [in=70, out=290] (8);
\end{scope}

\begin{scope}
  \draw (6,0) node (1) [main node] {000};
  \draw (7,0) node (2) [main node] {001};
  \draw (6,-2) node (3) [main node] {010};
  \draw (7,-2) node (4) [main node] {011};
   \draw (6,-4) node (5) [main node] {100};
  \draw (7,-4) node (6) [main node] {101};
  \draw (6,-6) node (7) [main node] {110};
  \draw (7,-6) node (8) [main node] {111};
  
  \draw (6.5,-7) node [label node] {$t$};
  \draw (6.5,.75) node [label node] {$G_3$};

    \draw (1) -- (3);
      \draw (2) -- (4);
        \draw (5) -- (7);
          \draw (6) -- (8);

    \draw (1) to [in=110, out=250] (5);
      \draw (2) to [in=70, out=290] (6);
        \draw (3) to [in=110, out=250] (7);
          \draw (4) to [in=70, out=290] (8);
\end{scope}

\begin{scope}
  \draw (9,0) node (1) [main node] {000};
  \draw (10,0) node (2) [main node] {001};
  \draw (9,-2) node (3) [main node] {010};
  \draw (10,-2) node (4) [main node] {011};
   \draw (9,-4) node (5) [main node] {100};
  \draw (10,-4) node (6) [main node] {101};
  \draw (9,-6) node (7) [main node] {110};
  \draw (10,-6) node (8) [main node] {111};
  
  \draw (9.5,-7) node [label node] {$t'$};
  \draw (9.5,.75) node [label node] {$G_4$};

    \draw (1) -- (3);
      \draw (2) -- (4);
        \draw (5) -- (7);
          \draw (6) -- (8);

    \draw (1) -- (2);
      \draw (3) -- (4);
        \draw (5) -- (6);
          \draw (7) -- (8);

    \draw (1) to [in=110, out=250] (5);
      \draw (2) to [in=70, out=290] (6);
        \draw (3) to [in=110, out=250] (7);
          \draw (4) to [in=70, out=290] (8);
\end{scope}

\end{tikzpicture}
\caption{Subcubes of $Q_3$ (left three graphs) and $Q_3$ (right graph).}\label{fig:subcubes}
\end{figure}

The next observation allows us to pull self-loops through edges in the graph representation of the dynamic CTQW.

\begin{obs}\label{obs:looppassing}

Let $A_1$ be a two-vertex graph with a self-loop on the vertex labeled by $\ket{1}$, $A_2$ be a two-vertex graph with an edge connecting $\ket{0}$ and $\ket{1}$, and $A_3$ be a two-vertex graph with a self-loop on the vertex labeled by $\ket{0}$. Then the CTQW on the dynamic graph sequence $\{(A_1, t), (A_2, \frac{(2m+1)\pi}{2})\}$ is equivalent to the CTQW on the dynamic graph sequence  $\{(A_2, \frac{(2m+1)\pi}{2}), (A_3, t)\} $. 
\end{obs}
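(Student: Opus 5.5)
The plan is a direct matrix computation, organized around one fact: at the special time $s=\frac{(2m+1)\pi}{2}$ the edge propagator is a multiple of $X$. Conjugating a diagonal matrix by $X$ swaps its two diagonal entries. A loop on $\ket{1}$ and a loop on $\ket{0}$ differ exactly by such a swap. So we expect an exact operator identity, not merely one up to global phase, and it should hold regardless of which matrix-ordering convention is used for the sequence.

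\emph{Step 1: compute the three propagators.} As in the proof of Observation~\ref{obs:movingselfloop}, we have $e^{-iA_1 t}=\ketbra{0}{0}+e^{-it}\ketbra{1}{1}$ and $e^{-iA_3 t}=e^{-it}\ketbra{0}{0}+\ketbra{1}{1}$. Also $e^{-iA_2 s}=\cos(s)I-i\sin(s)X$. With $s=\frac{(2m+1)\pi}{2}$ we have $\cos s=0$ and $\sin s=(-1)^m$, so
\begin{equation*}
e^{-iA_2 s}=-i(-1)^m X .
\end{equation*}

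\emph{Step 2: the swap identity.} Write $D_1=e^{-iA_1 t}$ and $D_0=e^{-iA_3 t}$. A short check on basis vectors gives $XD_1X=D_0$. Equivalently, $XD_1=D_0X$ and $D_1X=XD_0$.

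\emph{Step 3: conclude.} Using the paper's convention that the first graph in the sequence acts first (rightmost in the matrix product), the first sequence implements $e^{-iA_2 s}D_1=-i(-1)^m XD_1$. The second implements $D_0 e^{-iA_2 s}=-i(-1)^m D_0X$. These are equal by Step 2. If instead the opposite ordering convention is used, the two sides become $D_1X$ and $XD_0$, which are also equal by Step 2. So the statement holds under either reading.

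There is no real obstacle here. The only care needed is bookkeeping of the ordering convention, which the two identities in Step 2 handle simultaneously. It is also worth remarking that the scalar $-i(-1)^m$ appears identically on both sides, so no global-phase correction is required. This is the two-vertex specialization of Observation~\ref{obs:movingselfloop} with $j=0$, $k=1$.
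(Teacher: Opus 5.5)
Your proof is correct and is essentially the paper's argument: the paper computes $e^{-iA_2\frac{(2m+1)\pi}{2}}=-i(-1)^m(\ketbra{0}{1}+\ketbra{1}{0})$ and multiplies out both products to get $-i(-1)^m(e^{-it}\ketbra{0}{1}+\ketbra{1}{0})$ on each side, which is exactly your identity $XD_1=D_0X$; your swap identity $XD_1X=D_0$ just packages that computation and also covers the opposite ordering convention. One small nit on your closing remark: under the rightmost-first convention you adopt in Step 3, the matching specialization of Observation~\ref{obs:movingselfloop} is $j=1$, $k=0$, whereas $j=0$, $k=1$ gives $D_1X=XD_0$, the identity for the other convention; this is harmless, since that observation holds for any pair $j\neq k$.
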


\begin{proof}
First, let us describe the CTQW on the dynamic graph sequence $\{(A_1, t), (A_2, \frac{(2m+1)\pi}{2})\}$. Note that
\begin{align*}
    e^{-i A_1 t} &= \ketbra{0}{0} + e^{-it}\ketbra{1}{1} 
\end{align*}
and
\begin{align*}
     e^{-i A_2 \frac{(2m+1)\pi}{2}} &= \cos{\frac{(2m+1)\pi}{2}} (\ketbra{0}{0} + \ketbra{1}{1}) - i \sin{\frac{(2m+1)\pi}{2}} (\ketbra{0}{1} + \ketbra{1}{0})\\
     &= -i (-1)^m(\ketbra{0}{1} + \ketbra{1}{0}).
\end{align*}

Multiplying these together yields
\begin{equation*}
(-i (-1)^m(\ketbra{0}{1} + \ketbra{1}{0}))(\ketbra{0}{0} + e^{-it}\ketbra{1}{1}) =  -i(-1)^m (e^{-it} \ketbra{0}{1} + \ketbra{1}{0}).
\end{equation*}

Now, let us consider the second dynamic graph sequence. Note that
    \begin{equation*}
    e^{-i A_3 t} = e^{-it}\ketbra{0}{0} + \ketbra{1}{1}. 
    \end{equation*}

    Multiplying the two matrices together yields

    \begin{align*}
 (e^{-it}\ketbra{0}{0} + \ketbra{1}{1})(-i (-1)^m(\ketbra{0}{1} + \ketbra{1}{0})) =  -i(-1)^m (e^{-it} \ketbra{0}{1} + \ketbra{1}{0}).       
    \end{align*}
\end{proof}

The next observation allows one to absorb select phases, which are represented by self-loops in the dynamic CTQW model, into edges, which represent X rotations.

\begin{obs}\label{obs:edgeconjugation}
    The CTQW on the dynamic graph sequence $\{(A_1, t), (A_2, (2k+1)\pi), (A_1, t)\}$, where $A_1$ is the adjacency matrix of a two-vertex graph with a single edge and $A_2 $ is the adjacency matrix of the two-vertex graph with a single self-loop on either vertex $\ket{0}$ or vertex $\ket{1}$, is equivalent to the dynamic graph sequence $\{(A_2, (2k+1)\pi)\}$. Thus, we can conjugate self-loops by edges, on either vertex.
\end{obs}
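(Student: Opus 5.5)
The plan is a direct matrix computation, using anticommutation to avoid multiplying out trigonometric entries. First, I would record the two propagators. As in the proof of Construction~\ref{def:I}, the edge graph has $A_1 = X$, so
\begin{equation*}
e^{-iA_1 t} = \cos t\, I - i \sin t\, X .
\end{equation*}
For the self-loop graph, $A_2$ is either $\ketbra{0}{0}$ or $\ketbra{1}{1}$. In both cases $A_2^2=A_2$, so
\begin{equation*}
e^{-iA_2 s} = I + (e^{-is}-1)A_2 .
\end{equation*}
At $s=(2k+1)\pi$ we have $e^{-is}=-1$. This gives $e^{-iA_2 (2k+1)\pi} = I - 2A_2$, which equals $-Z$ for a loop on $\ket{0}$ and $Z$ for a loop on $\ket{1}$. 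Both cases are therefore $\pm Z$, and the sign will simply ride along.

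The key step is the anticommutation $ZX=-XZ$. From it, $Z\,f(X) = f(-X)\,Z$ for any power series $f$. In particular,
\begin{equation*}
Z\, e^{-iXt} = e^{iXt}\, Z .
\end{equation*}
Using this (with the order convention fixed as in Construction~\ref{def:I}; the product is symmetric here, so the order does not matter), I would compute
\begin{equation*}
e^{-iXt}(\pm Z)e^{-iXt} = \pm\, e^{-iXt}e^{iXt} Z = \pm Z .
\end{equation*}
This is exactly $e^{-iA_2(2k+1)\pi}$, for every $t$ and either choice of looped vertex.

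As a sanity check, I could instead multiply the $2\times 2$ matrices explicitly. The diagonal entries reduce to $\pm(\cos^2 t + \sin^2 t)$, and the off-diagonal terms cancel as $\mp i\sin t\cos t \pm i\sin t\cos t$. The main (mild) obstacle is bookkeeping of signs and of which vertex carries the loop. Reducing both cases to $\pm Z$ at the outset handles this cleanly. I would also remark that the statement holds for arbitrary $t$, not only special times, which is what justifies the phrase ``conjugate self-loops by edges.''
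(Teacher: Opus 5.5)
Your proposal is correct and follows essentially the same route as the paper: it too reduces the loop propagator to $\pm Z$, then uses anticommutation to write $(\cos t\, I - i\sin t\, X)(\pm Z)(\cos t\, I - i\sin t\, X) = (\cos t\, I - i\sin t\, X)(\cos t\, I + i\sin t\, X)(\pm Z) = \pm Z$. Your closed form $e^{-iA_2 s} = I + (e^{-is}-1)A_2$ just makes the sign bookkeeping for the two loop placements a little more explicit.
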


\begin{proof}
Note that the dynamic graph sequence $\{(A_2, (2k+1)\pi)\}$ is equivalent to the Z gate, so we need only show a CTQW on the first dynamic graph sequence is also a Z gate. Consider $\{(A_1, t), (A_2, (2k+1)\pi), (A_1, t)\}$. Note that

\begin{align*}
     e^{-i A_1 t} &= \cos{t} (\ketbra{0}{0} + \ketbra{1}{1}) - i \sin{t} (\ketbra{0}{1} + \ketbra{1}{0})
\end{align*}
and
\begin{align*}
    e^{-i A_2 (2k+1)\pi} &= \pm \ketbra{0}{0} \mp \ketbra{1}{1} = \pm Z
\end{align*}

Then, 
{\begin{align}
    \big( \cos{t} \; \mathbb{I} - i \; \sin{t} \; X \big) \big( \pm Z \big) \big( \cos{t} \; \mathbb{I} - i \; \sin{t} \; X \big) = \big( \cos{t} \; \mathbb{I} - i \; \sin{t} \; X \big) \big( \cos{t} \; \mathbb{I} + i \; \sin{t} \; X \big) \big( \pm Z \big) = \pm Z.
\end{align}}
\end{proof}

\begin{obs}\label{obs:loopedgeshift}
Let $A_1$ be the adjacency matrix of a $2^n$-vertex graph with self-loops on vertices $j$ and $k$, and let and $A_2$ be the adjacency matrix of a $2^n$-vertex graph with the single edge $jk$. For any $t \in \mathbb{R}$, the CTQW on the dynamic graph sequence $\{(A_1, t), (A_2, \pi)\}$ is equivalent to the CTQW on the dynamic graph sequence $\{(A_1, t + \pi)\}$.
\end{obs}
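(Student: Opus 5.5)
The plan is a direct computation: both propagators act nontrivially only on the two-dimensional subspace $\mathrm{span}\{\ket{j},\ket{k}\}$ and as the identity on its orthogonal complement, so it suffices to compare the two sides block by block. Let $P = \ketbra{j}{j} + \ketbra{k}{k}$ be the projector onto that subspace and $Q = \mathbb{I} - P$.

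\textbf{Step 1: the self-loop graph.} First I write $A_1 = P$, which is a projector, so $A_1^2 = A_1$. Summing the exponential series then gives
\begin{equation*}
e^{-iA_1 t} = Q + e^{-it}P .
\end{equation*}

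\textbf{Step 2: the edge graph.} Next I write $A_2 = \ketbra{j}{k} + \ketbra{k}{j}$, which satisfies $A_2^2 = P$. Exactly as in the earlier two-vertex computations of the paper (e.g.\ Construction~\ref{def:I}), this yields
\begin{equation*}
e^{-iA_2 s} = Q + \cos(s)\,P - i\sin(s)\,A_2 .
\end{equation*}
At $s=\pi$ the edge contribution vanishes, leaving $e^{-iA_2\pi} = Q - P$. This is the same ``$-I$ on the two-vertex block'' phenomenon noted in Construction~\ref{def:I}.

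\textbf{Step 3: multiply and compare.} Since $e^{-iA_2\pi}$ is a polynomial in $P$, it commutes with $e^{-iA_1 t}$, so the time ordering of the sequence is irrelevant. Using $PQ = 0$, the product is
\begin{equation*}
(Q - P)(Q + e^{-it}P) = Q - e^{-it}P = Q + e^{-i(t+\pi)}P .
\end{equation*}
By Step 1, this is exactly $e^{-iA_1(t+\pi)}$, which proves the equivalence. Note that no global phase is introduced, because the vertices outside $\{j,k\}$ remain untouched on both sides.

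There is no real obstacle here. The only point needing care is Step 2: one must check that the $\sin(\pi)$ term genuinely vanishes, so that the edge walk at time $\pi$ leaves no residual hopping between $\ket{j}$ and $\ket{k}$. One must also confirm that both propagators act as the identity on all other $2^n - 2$ vertices, so that nothing hidden happens outside the block.
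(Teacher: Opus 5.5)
Your proof is correct and takes the same route as the paper: write $A_1$ as the projector onto $\mathrm{span}\{\ket{j},\ket{k}\}$ and $A_2$ as the $jk$ hopping term, note that the edge evolved for time $\pi$ acts as $-1$ on that block and as the identity elsewhere, and multiply to get $e^{-iA_1(t+\pi)}$. Your $P/Q$ bookkeeping is also cleaner than the paper's displayed computation, which mistakenly writes the off-diagonal term $\ketbra{j}{k}+\ketbra{k}{j}$ where the projector $\ketbra{j}{j}+\ketbra{k}{k}$ belongs, both in $e^{-iA_1 t}$ and in the final result.
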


\begin{proof}
The adjacency matrices are
\begin{align*}
A_1 \;&=\; \ketbra{j}{j} + \ketbra{k}{k}, \\
A_2 \;&=\; \ketbra{j}{k} + \ketbra{k}{j}.
\end{align*}
Let us consider $e^{-i A_2 \pi}e^{-i A_1 t}$,
\begin{align*}
    e^{-i A_2 \pi}e^{-i A_1 t} &= [-(\ketbra{j}{j} + \ketbra{k}{k}) + \sum_{\ell \notin \{j,k\}} \ketbra{\ell}{\ell}][e^{-it}(\ketbra{j}{k}+\ketbra{k}{j})+ \sum_{\ell \notin \{j,k\}} \ketbra{\ell}{\ell}] = \\
 &  \sum_{\ell \notin \{j,k\}} \ketbra{\ell}{\ell} -e^{-i t}[\ketbra{j}{k} + \ketbra{k}{j}]].
\end{align*}
Now, consider $e^{-i A_1 (t+\pi)}$
\begin{align*}
    e^{-i A_1 (t+\pi)} &= (\cos(t+ \pi)-i \sin(t + \pi)) \ketbra{j}{k}+\ketbra{k}{j}+ \sum_{\ell \notin \{j,k\}} \ketbra{\ell}{\ell}] = \\
 &   (-\cos(t)+i \sin(t)) \ketbra{j}{k}+\ketbra{k}{j}+ \sum_{\ell \notin \{j,k\}} \ketbra{\ell}{\ell}] = \\
 &-e^{-i t} [\ketbra{j}{k}+\ketbra{k}{j}]+ \sum_{\ell \notin \{j,k\}} \ketbra{\ell}{\ell}.
\end{align*}
\end{proof}

\section{Conclusion}\label{sec:discussion}
 In this work, we introduce five new dynamic graph sequence simplification techniques, as well as formalize the dynamic CTQW equivalent to the $U^{1/n}$ gate for a general unitary $U$. These simplifications include rules for passing graphs that only have edges through graphs that only have loops or vice versa, and for combining commuting graphs. We note that this list of graph simplification rules may not be exhaustive. Avenues for future research include finding new simplification rules for CTQWs on directed graph and developing simplification rules for dynamic CTQWs on directed graphs, which have been implemented and have applications including PageRank \cite{wang2020experimental}.

Continuous-time quantum walks on graphs are universal for computation \cite{childs2003universal, herrman2019continuous}, however little research has studied how one can convert back and forth between the CTQW model of computing and the gate model. This work, along with other dynamic CTQW papers \cite{herrman2019continuous, herrman2022relating, wong2019isolated, adisa2021implementing} give direct mappings between select sets of gates in the gate model of quantum computing and dynamic CTQWs. Furthermore, as evidenced in \cite{gonzales2025efficient} and \cite{atallah2026simulating}, CTQWs on dynamic graphs can be used to implement deterministic state preparation and Hamiltonian simulation algorithms that require fewer CX gates than other approaches. Thus, other avenues for future research include developing new transpiler routines or AI models that can decide which form of computation (gate model or dynamic CTQWs converted to the gate model) is best suited for different applications, where best suited may mean requiring fewer controlled gates or shorter circuits. 

\section*{Author contributions}
M.A. and J.M. developed graph simplification rules. 
D.D. helped derive the CTQW gates. 
Z.S. secured funding.
R.H. developed graph simplification rules and secured funding. All authors wrote, read, edited, and approved the final manuscript.

\section*{Acknowledgments}
M. Atallah, R. Herrman, and J. Mahmud acknowledge DE-SC0024290. D. Dilley and Z. Saleem acknowledge DOE-145-SE-14055-CTQW-FY23. The funder played no role in study design, data collection, analysis and interpretation of data, or the writing of this manuscript. 

\section*{Competing interests}
All authors declare no financial or non-financial competing interests. 

\section*{Code and Data Availability}
 \label{sec:codeAndDataAvail}
 Data sharing not applicable to this article as no datasets were generated
or analyzed during this study.

\vspace{20pt}
\noindent
\framebox{\parbox{\linewidth}{
The submitted manuscript has been created by UChicago Argonne, LLC, Operator of 
Argonne National Laboratory (``Argonne''). Argonne, a U.S.\ Department of 
Energy Office of Science laboratory, is operated under Contract No.\ 
DE-AC02-06CH11357. 
The U.S.\ Government retains for itself, and others acting on its behalf, a 
paid-up nonexclusive, irrevocable worldwide license in said article to 
reproduce, prepare derivative works, distribute copies to the public, and 
perform publicly and display publicly, by or on behalf of the Government.  The 
Department of Energy will provide public access to these results of federally 
sponsored research in accordance with the DOE Public Access Plan. 
http://energy.gov/downloads/doe-public-access-plan.}}

\bibliographystyle{unsrt}
\bibliography{refs}

\end{document}